\definecolor{darkred}  {rgb}{0.5,0,0}
\definecolor{darkblue} {rgb}{0,0,0.5}
\definecolor{darkgreen}{rgb}{0,0.5,0}
\theoremstyle{definition}
\newtheorem{observation}{Observation}
\newtheorem{conjecture}{Conjecture}
\newtheorem{lemma}{Lemma}
\newtheorem{proposition}{Proposition}
\newtheorem{theorem}{Theorem}
\newtheorem*{remark}{Remark}
\newcommand{\mbb}{\mathbb}
\newcommand{\mc}{\mathcal}
\newcommand{\mf}{\mathfrak}
\newcommand{\tr}{\textrm{Tr}}
\newcommand{\wt}{\widetilde}
\newcommand{\ket}[1]{|#1\rangle}
\newcommand{\op}[2]{|#1\rangle\langle#2|}
\newcommand{\x}{\mathbf{x}}
\definecolor{cool_green}{rgb}{0.0, 0.5, 0.0}
\begin{document}

\preprint{APS/123-QED}

\title{Exact Steering Bound for Two-Qubit Werner States}

\author{Yujie Zhang}
\email{yujie4@illinois.edu}
 \affiliation{Department of Physics, University of Illinois at Urbana-Champaign, Urbana, IL 61801, USA}
\author{Eric Chitambar}
\email{echitamb@illinois.edu}
\affiliation{
Department of Electrical and Computer Engineering, University of Illinois at Urbana-Champaign, Urbana, IL 61801, USA
}%
\date{\today}

\begin{abstract}
We investigate the relationship between projective measurements and positive operator-valued measures (POVMs) in the task of quantum steering.  A longstanding open problem in the field has been whether POVMs are more powerful than projective measurements for the steerability of noisy singlet states, which are known as Werner states.  We resolve this problem for two-qubit systems and show that the two are equally powerful, thereby closing the so-called Werner gap.  Using the incompatible criteria for noisy POVMs and the connection between quantum steering and measurement incompatibility, we construct a local hidden state model for Werner states with Bloch sphere radius $r\leq 1/2$ under general POVMs.  This construction also provides a local hidden variable model for a larger range of Werner states than previously known. {In contrast, we also show that projective measurements and POVMs can have inequivalent noise tolerances when using a fixed state ensemble to build different local hidden state models.}  These results help clarify the relationship between projective measurements and POVMs for the tasks of quantum steering and nonlocal information processing.

\end{abstract}
\maketitle
Entanglement has always been one of the most puzzling features of quantum mechanics \cite{Horodecki2009,Einstein1935}.  Most notably, entanglement enables quantum behavior that is completely inconsistent with classical mechanics or any other theory satisfying the principle of locality \cite{Bell1964}. One such behavior, known as Bell nonlocality, has been extensively studied \cite{Brunner2014}, demonstrated experimentally \cite{Aspect1981, Clauser1969} and utilized in different quantum information applications \cite{Ekert1992, Masanes2011, Buhrman-1997a, Colbeck-2011a}.  While entanglement is a necessary and sufficient ingredient for demonstrating nonlocality in pure states \cite{Gisin1991, Gisin1992}, the relationship between entanglement and nonlocality in general mixed states is much less clear \cite{Popescu1995, Werner1989, Barrett2002, Methot-2007a}.  Most notable is the existence of Bell local entangled states, which are those that cannot generate Bell nonlocality by themselves.  The nature of nonlocality becomes even more interesting when considering a more general nonlocal effect known as quantum steering, which can be realized even for some Bell local states  \cite{Wiseman2007, Jones2007}.  As originally envisioned by Schrödinger in 1935 \cite{schrodinger1935, Wiseman2007}, {quantum steering involves a type of remote state preparation, and it is arguably the closest realization of Einstein's ``spooky action at a distance.''}

\par 
Understanding the differences between entanglement, steering, and Bell nonlocality has been a fundamental challenge in quantum information science.  The seminal work of Werner showed that not all entangled states are capable of generating Bell nonlocality by the explicit construction of local hidden variable (LHV) models \cite{Werner1989}.  The latter refers to theoretical models that reproduce the local measurement statistics of certain quantum states while still satisfying the principle of locality.  Werner's original model only considered local projection-valued measures (PVMs), but Barrett later extended it to account for the most general type of local measurements, which are those described by positive operator-valued measures (POVMs) \cite{Barrett2002}. Moreover, Werner and Barrett's models are even stronger in that they constitute what is now called a local hidden state (LHS) model. {Such models simulate not only the local measurement statistics but also the post-measurement quantum states for one party conditioned on the local measurement outcome of the other.  States satisfying an LHS model are called unsteerable.  }

Since the work of Werner and Barrett, significant advances have been made in the construction of both LHV and LHS models \cite{Hirsch2013, Augusiak2014, Quintino2015, Bowles2014, Hirsch2016, Cavalcanti2016, Nguyen2020, Nguyen2020a}.  {Some of these models hold only for PVMs, while others encompass POVMs as well.  The distinction between PVMs versus POVMs is crucial from a fundamental perspective.  While PVMs are experimentally much simpler to implement, a full accounting of what quantum mechanics allows under local processing should include the use of local ancilla, the enabling ingredient for POVMs.  Thus, the study of nonlocality is incomplete if it is just limited to PVMs.  

Focusing here on the case of steerability, this motivates the question of whether PVMs are strong enough on their own to separate the class of steerable states from unsteerable ones.  This question has remained unsolved for even for the simplest scenario of two-qubit Werner states, the canonical family of states for investigating nonlocality due to its analytical simplicity and deep connections to other aspects of quantum information theory.  So important is the steerability question for two-qubit Werner states that it currently sits on the Open Quantum Problems List, maintained by the Institute for Quantum Optics and Quantum Information (IQOQI) in Vienna (Problem 39 \cite{IQOQI2017}).}  Strong numerical evidence \cite{Nguyen2018} and explicit constructions for some special cases \cite{Werner2014} suggest that POVMs actually provide no advantage over PVMs in qubit steering of Werner states.  But a full solution to the problem has remained elusive, as well as a systematic way of constructing a LHS model for POVMs.  This paper resolves these open questions and proves that POVMs and PVMs are indeed equivalent for the steerability of two-qubit Werner states. 

\par Our method is based on the connection between quantum steering and measurement incompatibility \cite{Busch1986, Ali2009}, which has been established in the general setting 
\cite{Quintino2014, Uola2014} and further refined in the case of finite resources \cite{Bowles2015, zhang2023}.  We specifically focus on the fact that a POVM LHS model exists for a Werner state of radius $r$ if and only if there exists a compatible model for any set of noisy POVMs of the form $\{M^r_{a|x}:=rM_{a|x}+(1-r)\tfrac{\tr M_{a|x}}{2}\mbb{I}\}_{a,x},$ where $x$ denotes the POVM in the family, $a$ denotes the measurement outcome, $M_{a|x}\ge 0$, and $\sum_a M_{a|x}=\mbb{I}$ for all $x$.  In this letter, we will prove the compatibility of all noisy POVMs at radius $r=1/2$ by constructing an explicit compatibility model. The latter can then be immediately used as an LHS model for all qubit Werner states with $r\le 1/2$ subject to general measurements, thus implying their unsteerability using POVMs.  

\section{Results:} 
In quantum steering \cite{Uola2020} a bipartite state  $\rho_{AB}$ is shared between two observers, Alice and Bob. When Alice implements a local measurement chosen from some family $\{M_{a|x}\}_{a,x}$, the possible post-measurement states for Bob's system is given by the state assemblage $\{\sigma_{a|x}\}_{a,x}$, where
\begin{equation}
\label{eq:state ensemble}
\sigma_{a|x}=\tr_A[(M_{a|x}\otimes\mbb{I})\rho_{AB}].
\end{equation}
The state assemblage is unsteerable if it admits a local hidden state (LHS) model:
\begin{equation}
\sigma_{a|x}=\int d\lambda p(\lambda)p(a|x,\lambda)\rho_\lambda,
\label{Eq:LHS}
\end{equation}
where $p(\lambda)$ is a probability density function over variable $\lambda$ shared between Alice and Bob, $p(a|x,\lambda)$ is a (stochastic) response function for Alice, and $\rho_\lambda$ is a (hidden) state for Bob satisfying $\int d\lambda p(\lambda)\rho_\lambda=\rho_B$. 
 
Similarly, a family of POVMs $\{M_{a|x}\}_{a,x}$ is defined to be jointly measurable (compatible) if there exists a compatible model:
\begin{equation}
M_{a|x}=\int d\lambda p(a|x,\lambda)\Pi_\lambda
\label{Eq:compatible}
\end{equation}
with response functions $p(a|x,\lambda)$ and ``parent'' POVM $\{\Pi_\lambda\}_\lambda$.  The similar forms of Eqns. \eqref{Eq:LHS} and \eqref{Eq:compatible} is no coincidence, and a one-to-one correspondence between quantum steering and measurement incompatibility has been previously established \cite{Quintino2014, Uola2014, Heinosaari2015, zhang2023}.  Here we restate it in terms of the two-qubit Werner state $\rho_W(r)$ and an arbitrary noisy qubit POVM $\{M_a^r\}_a$, both of radius $r$ and respectively defined as
\begin{align}
        \rho_W(r)&=r\op{\Psi_-}{\Psi_-}+(1-r)\frac{\mbb{I}\otimes\mbb{I}}{4}\label{eq:Werner}\\
        M_{a}^r&=rM_{a}+(1-r)\frac{{\tr{(M_{a})}\mbb{I}}}{2},\label{Eq:noisy-POVM-defn}
\end{align}  
where $\ket{\Psi_-}_{AB}=\frac{1}{\sqrt{2}}(\ket{01}-\ket{10})$.
\begin{lemma}
A Werner state $\rho_W(r)$ can be simulated by an LHS model if and only if there exists a parent POVM that can simulate any noisy POVM $\{M_{a}^r\}_a$. {Moreover, the response functions $p(a|x,\lambda)$ in the two simulations are the same while the parent POVM $\{\Pi_\lambda\}_\lambda$ for the $\{M_{a}^r\}_a$ corresponding to a LHS ensemble for $\rho_W(r)$ with 
\begin{align}
\rho_\lambda&=r\frac{\sigma_y\Pi_\lambda^{\mathsf{T}}\sigma_y}{\tr[\Pi_\lambda]}+(1-r)\frac{\mbb{I}}{2},&p(\lambda)&=\frac{\tr[\Pi_\lambda]}{2},\notag
\end{align} 
and a similar correspondence holds in the other direction.}
\label{lem:bound}
\end{lemma}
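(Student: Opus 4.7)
The plan is to exploit the algebraic identity that links the two-qubit singlet to matrix transposition via the Pauli $Y$ operator, converting the steering problem for $\rho_W(r)$ into a POVM simulation problem for $\{M_a^r\}$. The key preliminary step is to establish the singlet identity
\begin{equation*}
\tr_A\!\left[(M \otimes \mbb{I})\,\op{\Psi_-}{\Psi_-}\right] \;=\; \tfrac{1}{2}\,\sigma_y M^T \sigma_y,
\end{equation*}
which follows either by a short computation in the computational basis or by writing $\ket{\Psi_-} = -(\mbb{I}\otimes i\sigma_y)\ket{\Phi^+}$ and invoking the standard partial-trace identity for the maximally entangled state $\ket{\Phi^+}$. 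Substituting into the Werner decomposition \eqref{eq:Werner} and using $\tr_A[(M\otimes\mbb{I})(\mbb{I}/4)] = \tfrac{1}{4}\tr(M)\mbb{I}$ yields $\sigma_{a|x} = \tfrac{r}{2}\sigma_y M_a^T \sigma_y + \tfrac{1-r}{4}\tr(M_a)\,\mbb{I}$, which, after comparing with \eqref{Eq:noisy-POVM-defn}, collapses to the bridge identity $\sigma_{a|x} = \tfrac{1}{2}\sigma_y (M_a^r)^T \sigma_y$. Thus the Werner assemblage is related to the noisy POVM element by a fixed isometric map $N \mapsto \tfrac{1}{2}\sigma_y N^T \sigma_y$.

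With the bridge in hand, both directions of the ``if and only if'' reduce to substitution. In the forward direction, I would assume a parent POVM $\{\Pi_\lambda\}$ with response $p(a|x,\lambda)$ simulating every noisy POVM, plug $M_a^r = \int p(a|x,\lambda)\Pi_\lambda\,d\lambda$ into the bridge identity, and regroup the resulting integrand into the LHS form $\int p(\lambda) p(a|x,\lambda) \rho_\lambda\,d\lambda$ using the \emph{same} response function. Matching integrands forces $p(\lambda) = \tr[\Pi_\lambda]/2$ and identifies the hidden state $\rho_\lambda$ as a state proportional to $\sigma_y \Pi_\lambda^T\sigma_y$, with any additional maximally mixed contribution absorbable by expanding the white-noise part of $\rho_W(r)$ separately; this matches the claimed convex-combination form of $\rho_\lambda$ in the lemma. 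Then I would verify (i) that $\int p(\lambda)\, d\lambda = 1$ by using $\sum_a M_a^r = \mbb{I}$, which forces $\int \Pi_\lambda\, d\lambda = \mbb{I}$; (ii) that each $\rho_\lambda$ is a state, which is immediate because $\sigma_y$ is unitary; and (iii) that $\int p(\lambda)\rho_\lambda\,d\lambda = \mbb{I}/2 = \rho_B$. The converse direction inverts the same correspondence: starting from any LHS model for $\rho_W(r)$, define $\Pi_\lambda := 2\, p(\lambda)\,\sigma_y \rho_\lambda^T \sigma_y$, and check that positivity of $\rho_\lambda$ and the marginal $\rho_B = \mbb{I}/2$ yield positivity and resolution of the identity for $\{\Pi_\lambda\}$, respectively.

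The main technical obstacle I anticipate is the careful bookkeeping of how the radius $r$ and the white-noise piece $(1-r)\mbb{I}/4$ in the Werner state distribute against the two parts of the noisy POVM $M_a^r = rM_a + (1-r)\tr(M_a)\mbb{I}/2$; a naive regrouping can easily produce expressions corresponding to a shifted radius (for instance $r^2$ rather than $r$). Resolving this requires invoking $\tr(M_a^r) = \tr(M_a)$ to merge the isotropic contributions consistently, and separately tracking which piece of $\rho_\lambda$ arises from the pure-singlet component of $\rho_W(r)$ versus its maximally mixed component. A secondary conceptual point is that the equivalence must hold uniformly over all measurement families $\{M_{a|x}\}_{a,x}$; this is automatic from the bridge identity since the map $M \mapsto \tfrac{1}{2}\sigma_y M^T\sigma_y$ is linear and invertible, so a single response function $p(a|x,\lambda)$ services both the parent-POVM decomposition and the LHS ensemble.
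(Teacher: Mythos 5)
Your approach is the same as the paper's: everything hinges on the bridge identity $\tr_A[(M\otimes\mbb{I})\op{\Psi_-}{\Psi_-}]=\tfrac{1}{2}\sigma_y M^{\mathsf{T}}\sigma_y$, which collapses the Werner assemblage to $\sigma_{a|x}=\tfrac{1}{2}\sigma_y (M^r_{a|x})^{\mathsf{T}}\sigma_y$, after which both directions of the equivalence are termwise substitution under a linear, invertible, positivity- and trace-preserving map, with the same response function serving both decompositions. That part of your argument, including checks (i)--(iii) and the converse via $\Pi_\lambda:=2p(\lambda)\sigma_y\rho_\lambda^{\mathsf{T}}\sigma_y$, is correct and is exactly the paper's route.

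The one step that does not hold up is your claim that the hidden state you obtain ``matches the claimed convex-combination form of $\rho_\lambda$ in the lemma'' once a maximally mixed contribution is absorbed. Direct matching of integrands gives $p(\lambda)=\tr[\Pi_\lambda]/2$ and $\rho_\lambda=\sigma_y\Pi_\lambda^{\mathsf{T}}\sigma_y/\tr[\Pi_\lambda]$ with \emph{no} white-noise admixture; for the uniform parent $\Pi_{\hat{n}}=\tfrac{1}{4\pi}(\mbb{I}+\hat{n}\cdot\vec{\sigma})$ this yields the pure states $\tfrac{1}{2}(\mbb{I}-\hat{n}\cdot\vec{\sigma})$, consistent with the known PVM model at $r=1/2$ and with the correspondence $\Pi_\lambda=2p(\lambda)\sigma_y\rho_\lambda^{\mathsf{T}}\sigma_y$ used in the paper. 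If instead you keep the same $p(\lambda)$ and response functions but take $\rho_\lambda=r\,\sigma_y\Pi_\lambda^{\mathsf{T}}\sigma_y/\tr[\Pi_\lambda]+(1-r)\mbb{I}/2$ with $\{\Pi_\lambda\}$ the parent of the \emph{noisy} POVMs, the reconstructed assemblage is $\tfrac{r}{2}\sigma_y(M^r_a)^{\mathsf{T}}\sigma_y+\tfrac{1-r}{4}\tr(M_a)\mbb{I}$, which is the assemblage of $\rho_W(r^2)$, not $\rho_W(r)$ --- precisely the ``shifted radius'' failure mode you yourself warn about. So the admixture cannot be absorbed as claimed; your clean formula is the one forced by the bridge identity, and the discrepancy lies with the displayed form of $\rho_\lambda$ in the lemma statement rather than with your derivation. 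You should state your formula and not paper over the mismatch.
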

\begin{figure}[t]
    \centering
\includegraphics[width=0.48\textwidth]{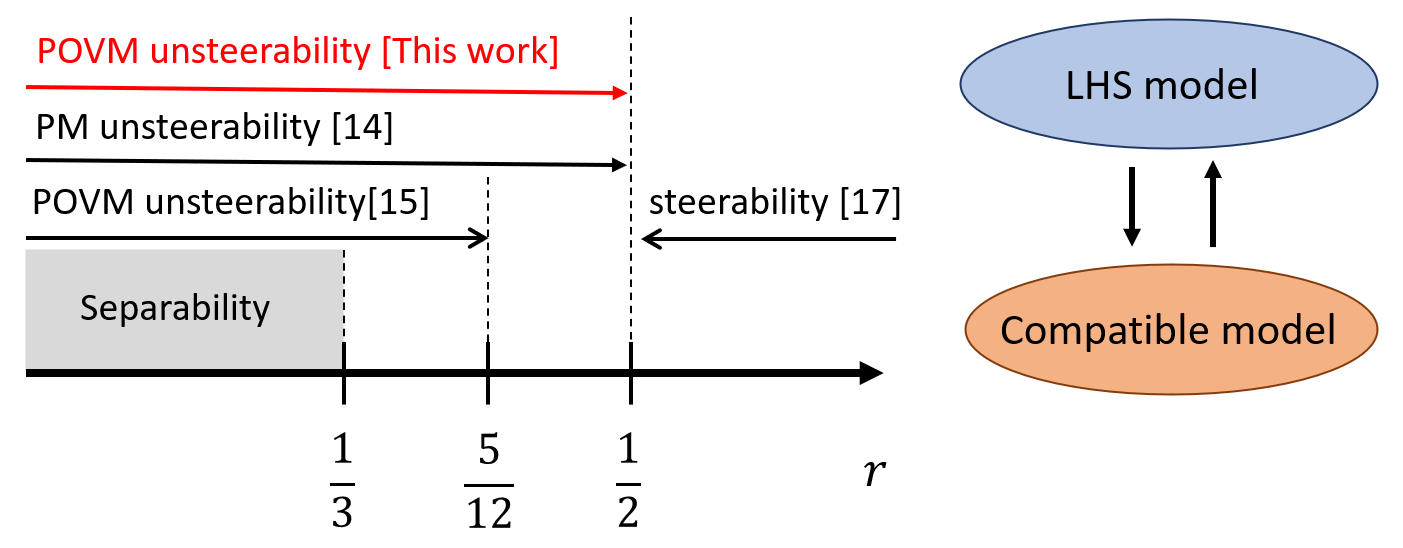}
    \caption{Left: Steering bounds for two-qubit Werner states $\rho_W(r)$ under projective measurements (PVMs) and POVMs. {This work fully closes the ``Werner gap'' by raising the POVM lower bound to $r=1/2$.}; Right: The presence of a LHS model for noisy Werner state under to PVMs/POVMs implies a compatible model for the set of all noisy PVMs/POVMs and vice versa.}
    \label{fig:steering bound}
\end{figure}
\begin{remark}
In Lemma ~\ref{lem:bound}, it is sufficient to just consider noisy versions of extremal POVMs.  For qubits, every extremal POVM consists of no more than four effects $M_a$ that are each rank-one \cite{Ariano2005}.  
We can therefore restrict our analysis to POVM effects expressible in the Pauli basis as $M_{a}=\mu_{a}(\mbb{I}+\hat{m}_{a}\cdot 
\vec{\sigma}$), where $\vec{\sigma}=[\sigma_x,\sigma_y,\sigma_z]$ is the standard vector of Pauli matrices and $|\hat{m}_a|=1$.  From positivity and normalization, one has $\sum_a\mu_a=1$, $\sum_a\mu_a\hat{m}_a=\vec{0}$ and $\mu_a\ge 0$, which together also imply that $\mu_{a}\le 1/2$.  {Note that PVMs are a special subset of extermal POVMs consisting of two effects $M_{\pm}=1/2(\mbb{I}\pm \hat{m}\cdot\vec{\sigma})$. When expressed in the Pauli basis, Eq. \eqref{Eq:noisy-POVM-defn} for extremal POVMs then becomes $M^r_a=\mu_{a}(\mbb{I}+r\hat{m}_{a}\cdot \vec{\sigma})$.
}
\end{remark}

Extensive research has been devoted to understanding the steering and Bell nonlocal bounds for Werner states \cite{Werner1989, Barrett2002, Augusiak2014, Hua2015, Hirsch2017, Divianszky2017, Nguyen2018a, designolle2023}. Compared to the Bell nonlocal threshold, the steering threshold for two-qubit Werner states is better understood, and the exact value of $r=1/2$ has been proven for PVMs \cite{Wiseman2007}. However, for POVMs the steering bound is still yet unknown \cite{IQOQI2017}.  Here we approach this problem by focusing on measurement compatibility and characterizing the bound when all noisy POVMs became compatible. From Lemma.~\ref{lem:bound}, this will lead to a steering bound for the Werner state. 
\par

\textbf{Compatible model --} The compatibility problem and compatible models for noisy qubit projective measurements $\{M^r_{\pm}=1/2(\mbb{I}\pm r\hat{m}\cdot\sigma)\}_{\hat{n}}$ have been developed and used to study different problems \cite{Uola2016, Bavaresco2017}, and it was shown in our previous work \cite{zhang2023} that there exists a compatible model for the whole family of noisy projective measurements using a finite-sized parent POVM whenever $r< 1/2$, while at $r=1/2$, the infinite-outcome parent POVM $\{\Pi_{\hat{n}}=\frac{1}{4\pi}(\mbb{I}+\hat{n}\cdot\vec{\sigma})\}_{\hat{n}}$ provides a simulation.  Explicitly, a compatible model that simulates a noisy PVM at $r=\frac{1}{2}$ in any spin direction $\hat{m}$ is given by
\begin{align}
M^{r=1/2}_{\pm}&=\int_{\mc{S}} d\hat{n} p(\pm|\hat{m},\hat{n}) \Pi_{\hat{n}}=\int_{\mc{S}} d\hat{n} \Theta(\pm\hat{m}\cdot\hat{n})\Pi_{\hat{n}}\notag\\
&= \int_{\mc{S}} d\hat{n}\Theta(\pm\hat{m}\cdot\hat{n})\frac{\mbb{I}+\hat{n}\cdot\vec{\sigma}}{4\pi}=\frac{1}{2}(\mbb{I}\pm\frac{1}{2}\hat{m}\cdot\vec{\sigma}),
\label{eq:full-compatible model}
\end{align}
where $p(\pm|\hat{m},\hat{n})=\Theta(\pm\hat{m}\cdot\hat{n})$ with $\Theta$ being the Heaviside step function, and the integration is taken over the surface of the entire Block sphere $\mc{S}$.  Our goal is to generalize this model for the simulation of noisy POVMs.  Unlike Barrett's model \cite{Barrett2002}, we will keep the same radius $r$ in this modification.  

As a starting point, we use the same parent POVM $\{\Pi_{\hat{n}}\}_{\hat{n}}$ and try to extend the PVM model to general POVMs $\{M^r_a=\mu_a(\mbb{I}+r\hat{m}_a\cdot\vec{\sigma})\}_a$ by using the the response function to be $p(a|\{M^r_a\}_a,\hat{n})=2\mu_a\Theta(\hat{m}_a\cdot\hat{n})$.  This does, in fact, provide a decomposition of each effect $M^r_a$ in terms of the parent POVM $\{\Pi_{\hat{n}}\}_{\hat{n}}$ since
\begin{equation}
M^{r=1/2}_{a}=\mu_a(\mbb{I}+\frac{1}{2}\hat{m}_a\cdot\vec{\sigma})=\int_{\mc{S}} d\hat{n}2\mu_a \Theta(\hat{m}_a\cdot\hat{n})\Pi_{\hat{n}}.
\label{eq: rescaled respons}
\end{equation}
Unfortunately, this response function is not normalized; i.e, $\sum_a p(a|\{M^r_a\}_a,\hat{n})\ne 1$ for all $\hat{n}$.  To remedy this, we use the fact that the collection of (not necessarily normalized) response functions satisfying Eq. \eqref{eq: rescaled respons} is under constrained, 
and we judiciously search it to find a normalized one.  We show in detail below how this can be done, focusing first on three-outcome POVMs and then moving to four outcomes.  By the remark above, this covers all extremal qubit POVMs and therefore solves the full problem.  We note that the equivalence between PVMs and POVMs for the three-outcome case was previously shown in Ref. \cite{Werner2014} using different means.
\begin{figure}[t]
    \centering
\includegraphics[width=0.48\textwidth]{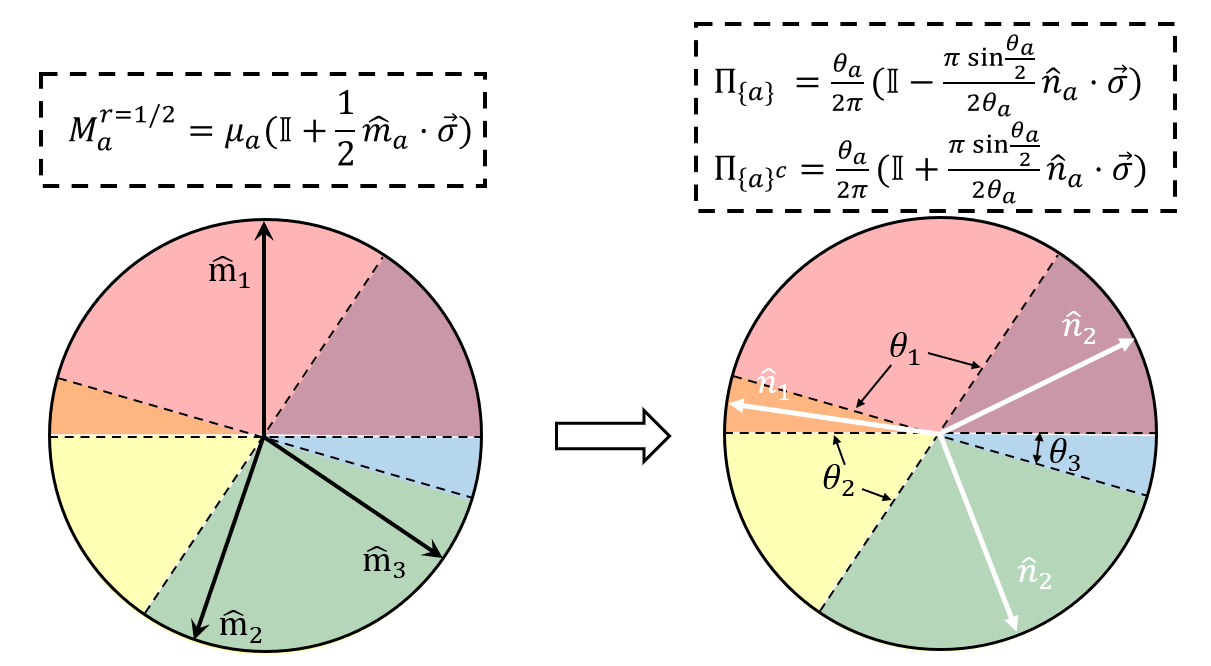}
    \caption{Top view on the plane defined by $\{\hat{m}_a\}_{a=1}^3$: The sphere is divided into 6 regions depending on $\Theta(\hat{m}_a\cdot\hat{n})$ and $\Pi_A$ can be obtained by integrating on the unit circle.  The analytic computation is described in the Method section.}
    \label{fig:partition}
\end{figure}
\begin{theorem}
Any noisy three-outcome POVM $\{M^{r=1/2}_a\}_{a=1}^3$ can be simulated by the POVM $\{\Pi_{\hat{n}}\}_{\hat{n}}$.
 \label{thm:3outcome}
\end{theorem}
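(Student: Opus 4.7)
The plan is to exploit the fact that $\sum_a \mu_a \hat{m}_a = \vec{0}$ forces the three Bloch vectors $\hat{m}_1,\hat{m}_2,\hat{m}_3$ to be \emph{coplanar}. After rotating so that this plane is the equatorial $xy$-plane, every quantity of interest becomes independent of the polar angle $\theta$, and the problem reduces to a one-dimensional decomposition on the azimuthal circle. The six half-planes $\hat{m}_a\cdot\hat{n}>0$ become six arcs on this circle, with endpoints at $\phi_a\pm\pi/2$, so the naive (unnormalized) response $p_0(a\mid\hat{n})=2\mu_a\Theta(\hat{m}_a\cdot\hat{n})$ is piecewise constant on the six slices depicted in Fig.~\ref{fig:partition}.

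First I would carry out the $\theta$-integration against $\Pi_{\hat{n}}=\frac{1}{4\pi}(\mathbb{I}+\hat{n}\cdot\vec{\sigma})$. This collapses the target identity $M^{r=1/2}_a=\int d\hat{n}\,g_a(\hat{n})\Pi_{\hat{n}}$ to a one-dimensional decomposition $\mu_a\mathbb{I}+\tfrac{\mu_a}{2}\hat{m}_a\cdot\vec{\sigma}=\int_0^{2\pi}d\phi\,g_a(\phi)\bigl[\tfrac{1}{2\pi}\mathbb{I}+\tfrac{1}{8}(\cos\phi\,\sigma_x+\sin\phi\,\sigma_y)\bigr]$, equivalent to the three moment conditions
\begin{equation}
\int g_a\,d\phi=2\pi\mu_a,\quad \int g_a\cos\phi\,d\phi=4\mu_a\cos\phi_a,\quad \int g_a\sin\phi\,d\phi=4\mu_a\sin\phi_a.\notag
\end{equation}

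Next I would adopt the piecewise-constant ansatz $g_a(\phi)=q_a^R$ on slice $R$ and count freedoms. With $18$ unknowns, $6$ normalization constraints $\sum_a q_a^R=1$, and $9$ moment constraints (three of which are redundant via $\sum_a M^{r=1/2}_a=\mathbb{I}$ and $\sum_a\mu_a\hat{m}_a=\vec{0}$), there is a three-parameter family of candidate solutions, and the task is to choose one that is non-negative. The natural construction is to write $g_a(\phi)=p_0(a\mid\hat{n})+h_a(\phi)$, where the correction $h_a$ is piecewise constant on the six slices and is designed so that (i)~$\sum_a h_a(\phi)=1-\sum_a p_0(a\mid\hat{n})$, restoring normalization, and (ii)~$\int h_a\,d\phi=\int h_a\cos\phi\,d\phi=\int h_a\sin\phi\,d\phi=0$ for each $a$, preserving the decomposition. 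The required cancellations are enabled precisely by the identities $\sum_a\mu_a=1$ and $\sum_a\mu_a(\cos\phi_a,\sin\phi_a)=\vec{0}$. Intuitively, one shifts probability from the three ``double-active'' slices (where $\sum_a p_0>1$) to the three ``single-active'' slices (where $\sum_a p_0<1$), allocating the transferred weight to the outcomes whose $\hat{m}_a$ points roughly toward the receiving slice.

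The main obstacle is verifying that non-negativity survives across the full parameter range of extremal three-outcome POVMs, not just the symmetric trine. This is where the extremality bound $\mu_a\le 1/2$ together with the constraint that the three $\phi_a$ span angular gaps all $\le\pi$ (a consequence of $\vec{0}$ lying in the convex hull of $\{\mu_a\hat{m}_a\}$) becomes essential: these inequalities control the arc widths and the size of the correction in each slice. Once non-negativity is established slice-by-slice, the construction simultaneously furnishes a valid probability assignment and, via Lemma~\ref{lem:bound}, the desired LHS model at $r=1/2$ for the three-outcome case.
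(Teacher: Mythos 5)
Your overall strategy coincides with the paper's: the piecewise-constant ansatz on the six azimuthal slices is precisely the paper's coarse-grained six-effect POVM $\{\Pi_A\}_A$ of Eq.~\eqref{eq:cg effects}, your unnormalized starting point $p_0(a\mid\hat{n})=2\mu_a\Theta(\hat{m}_a\cdot\hat{n})$ is Table~\ref{tab:Response function}, and your correction $h_a$ lying in the kernel of the decomposition map is exactly the paper's device of adding multiples of the linear-dependence relations \eqref{eq: lp1} and \eqref{eq: lp2} among the $\Pi_A$. So this is not a different route, and the reduction to moment conditions on the circle is sound.

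The genuine gap is that you never establish non-negativity of the corrected response function; you flag it as ``the main obstacle'' and then proceed as if it were ``established slice-by-slice.'' That step is the entire mathematical content of the theorem, and it does not follow merely from $\mu_a\le 1/2$ together with the angular gaps being at most $\pi$. Concretely, after choosing the correction one must show that
$X=(\alpha_1q_1-\alpha_2q_2+\alpha_3q_3)/(2\alpha_1)$ and $Y=(\alpha_1q_1+\alpha_2q_2-\alpha_3q_3)/(2\alpha_1)$, with $q_a=1-2\mu_a$ and $\alpha_a=\theta_a/2\pi$ the normalized arc lengths, satisfy $X,Y\ge 0$ as well as $2\mu_2-X\ge 0$ and $2\mu_3-Y\ge 0$. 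The paper proves $X,Y\ge 0$ by converting the operator identity $\sum_a q_a(\Pi_{\{a\}^c}-\Pi_{\{a\}})=0$ into the vector identity $\sum_a q_a\sin(\theta_a/2)\hat{n}_a=\vec{0}$ and then running a chain of trigonometric inequalities (using $\tan x\ge x\ge\sin x$ on $[0,\pi/2)$ and the monotonicity of cosine) to obtain $\theta_1q_1+\theta_2q_2\ge\theta_3q_3$; the remaining bounds then follow from $X+Y=q_1$. Your proposal contains no analogue of this argument, and without it there is no guarantee that your multi-parameter family of candidate solutions intersects the non-negative orthant --- indeed, Proposition~\ref{prop:Farka} of the paper shows that for other (finite) parent POVMs the analogous non-negative solution genuinely fails to exist, so positivity cannot be taken for granted. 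To complete the proof you would need to exhibit a specific $h_a$ and verify slice-by-slice positivity, which is exactly what Tables~\ref{tab:Response function} and \ref{tab:Response function1} together with the inequality chain in the Methods section accomplish.
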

A detailed proof is presented in the Methods section, but the key idea backing our method is coarse-graining the parent POVM $\{\Pi_{\hat{n}}\}_{\hat{n}}$ into a finite-outcome POVM $\{\Pi_A\}_A$ which is capable of simulating the given POVM $\{M^{r=1/2}_a\}_a$.  This implies that $\{\Pi_{\hat{n}}\}_{\hat{n}}$ is also capable of simulating $\{M^{r=1/2}_a\}_a$.
 \par
The coarse-grained POVM $\{\Pi_A\}_A$ is determined by which combinations of the Bloch vectors $\hat{m}_a$ are ``on'' for a given $\hat{n}$ (i.e. $\Theta(\hat{m}_a\cdot\hat{n})$ equaling one).  More precisely, we define
\begin{align}
\Pi_A&=\int_{\mc{S}} d\hat{n} \prod_{a\in A}\Theta(\hat{m}_a\cdot\hat{n})\prod_{a'\notin A}(1-\Theta(\hat{m}_{a'}\cdot\hat{n}))\Pi_{\hat{n}}
\label{eq:cg effects} 
\end{align}
with $A\subseteq \{1,2,3\}$ and $\Pi_{\emptyset}=\Pi_{\{1,2,3\}}=0$.   For the parent measurement $\{\Pi_A\}_A$, the starting (unnormalized) response function $p(a|A)$ for the $M^{r=1/2}_a$ defined in Eq.~\ref{eq: rescaled respons} is given by the values in following table:
 \begin{table}[h]
    \centering
    \begin{tabular}{c|c|c|c|c|c|c}
    \hline
     & $\Pi_{\{2,3\}}$  & $\Pi_{\{1,3\}}$ & $\Pi_{\{1,2\}}$  & $\Pi_{\{3\}}$ & $\Pi_{\{2\}}$ & $\Pi_{\{1\}}$     \\
    \hline
    $M^{r=1/2}_1$ & 0 &$2\mu_1$ & $2\mu_1$& 0 & 0& $2\mu_1$\\ 
    $M^{r=1/2}_2$ & $2\mu_2$ & 0 & $2\mu_2$& 0 & $2\mu_2$& 0\\ 
    $M^{r=1/2}_3$ &$2\mu_3$ &$2\mu_3$ & 0& $2\mu_3$ & 0& 0\\ 
    \hline
    \end{tabular}
    \caption{Unnormalized response function $p(a|A)$ for simulating $\{M_a\}_{a=1}^3$ with $\{\Pi_A\}_A$.}
    \label{tab:Response function}
\end{table}
\\
Due to the two completion relations, $\sum_a M^{r=1/2}_a=\mbb{I}$ and $\sum_A\Pi_A=\mbb{I}$, the six effects $\Pi_A$ are linear dependent {and satisfy 
\begin{equation}
    \sum_{a=1}^3 q_a (\Pi_{\{a\}}-\Pi_{\{a\}^c})=0,
    \label{eq: lp1}
\end{equation}
where $q_a:=1-2\mu_a$ and $A^c=\{1,2,3\}\setminus A$ is the set complement of $A$.  Moreover the spherical symmetry in the coarse-graining implies that for every $a,a'\in \{1,2,3\}$, $\Pi_{\{a\}}+\Pi_{\{a\}^c}$ has a vanishing Bloch vector and 
\begin{equation}
    \frac{1}{\alpha_a}(\Pi_{\{a\}}+\Pi_{\{a\}^c}) -\frac{1}{\alpha_{a'}}(\Pi_{\{a'\}}+\Pi_{\{a'\}^c})=0,
   \label{eq: lp2}
\end{equation}
where $\alpha_a:=\tr(\Pi_{\{a\}})/2=\tr(\Pi_{\{a\}^c})/2>0$.}
We can modify the response function $p(a|A)$ in Table \ref{tab:Response function} by adding/subtracting Eqns. \eqref{eq: lp1} and \eqref{eq: lp2}, thereby changing the weights of the different effects $\Pi_A$ while maintaining a simulation of the $M^{r=1/2}_a$.  In particular, a new normalized function $p'(a|A)$ is specified in Table \ref{tab:Response function1} where
\begin{table}[t]
    \centering
    \begin{tabular}{c|c|c|c|c|c|c}
    \hline
     & $\Pi_{\{2,3\}}$  & $\Pi_{\{1,3\}}$ & $\Pi_{\{1,2\}}$  & $\Pi_{\{3\}}$ & $\Pi_{\{2\}}$ & $\Pi_{\{1\}}$    \\
    \hline
    $M^{r=1/2}_1$ & 0 &$2\mu_1$ & $2\mu_1$& 0 & 0& $2\mu_1$\\ 
    $M^{r=1/2}_2$ & $2\mu_2-X$ & 0 & $1-2\mu_1$& 0 & $1$& $Y$\\ 
    $M^{r=1/2}_3$ &$2\mu_3-Y$ &$1-2\mu_1$ & 0& $1$ & 0& $X$\\ 
    \hline
    \end{tabular}
    \caption{Normalized response function $p'(a|A)$ for simulating $\{M_a\}_{a=1}^3$ with $\{\Pi_A\}_A$.}
    \label{tab:Response function1}
\end{table}
\begin{align}
X&=\frac{\alpha_1q_1-\alpha_2q_2+\alpha_3q_3}{2\alpha_1},&
Y&=\frac{\alpha_1q_1+\alpha_2q_2-\alpha_3q_3}{2\alpha_1}.\notag
\end{align}
For $p'(a|A)$ to be a well-defined response function, all the values in Table \ref{tab:Response function1} must be non-negative and each column must be normalized.  This can be verified by a straightforward calculation carried out in the Methods section.  
Therefore, we conclude that
\begin{equation}
    M_a^{r=1/2}=\sum_Ap'(a|A)\Pi_A,
\end{equation}
and so any arbitrary three-outcome noisy measurement $\{M_a^{r=1/2}=\mu_a(\mbb{I}+1/2\vec{m}_a\cdot\vec{\sigma})\}_{a=1}^3$ can be simulated by a common POVM $\{\Pi_{\hat{n}}\}_{\hat{n}}$ with a response function $p'(a|A)$ that depends on (i) the coeffecients $\mu_a$ and (ii) the coarse-graining of the $\{\Pi_{\hat{n}}\}_{\hat{n}}$ into the $\{\Pi_A\}_A$.

Note that Eq. \eqref{eq:cg effects} reduces to integration on the unit circle, which is geometrically depicted in Fig~\ref{fig:partition} and analytically described in the Methods section. We also emphasize a key property in the proof of Theorem \ref{thm:3outcome}.
\begin{observation}
To renormalize the response function of a three-outcome POVM $\{M^{r=1/2}_a\}_{a=1}^3$, it is sufficient to change the response function of two of its effects and leave the third one untouched. 
\label{obs:obs}
\end{observation}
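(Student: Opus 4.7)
The plan is to read the observation off Table \ref{tab:Response function1} and then explain why leaving one row untouched is not an accident of the specific parameterization chosen. First, I would compute the column-sum deficit $c_A := 1 - \sum_{a=1}^3 p(a|A)$ of the unnormalized Table \ref{tab:Response function}. Using $\sum_a \mu_a = 1$, I expect to find $c_{\{a\}^c} = -q_a$ and $c_{\{a\}} = +q_a$, so the six-component deficit vector $c$ already has the same pattern of coefficients that appears in Equation \eqref{eq: lp1}.

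The key algebraic step is to observe that this deficit vector $c$ lies in the kernel of the linear map $v\mapsto \sum_A v_A\Pi_A$. Indeed, $\sum_A c_A \Pi_A = \sum_{a=1}^3 q_a(\Pi_{\{a\}}-\Pi_{\{a\}^c})$, which vanishes by Equation \eqref{eq: lp1}. Since any per-row modification $\delta^{(a)}$ must satisfy $\sum_A \delta^{(a)}(A)\Pi_A = 0$ in order to preserve the simulation identity $\sum_A p'(a|A)\Pi_A = M^{r=1/2}_a$, the total correction $\delta^{(1)} + \delta^{(2)} + \delta^{(3)} = c$ never requires any nonzero contribution from the first row just to achieve the POVM simulation constraint; the choice $\delta^{(1)} = 0$ is algebraically consistent.

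What remains is to split $c$ across only two rows so that $\delta^{(2)}$ and $\delta^{(3)}$ individually lie in the kernel and jointly preserve the non-negativity of every entry. Here the two independent identities contained in Equation \eqref{eq: lp2} supply the additional kernel directions needed to carry out this split: the scalars $X$ and $Y$ in Table \ref{tab:Response function1} parameterize how much of the row-1 deficit is absorbed into row 2 versus row 3, with $X+Y = q_1$ forced by the column-sum condition on $\Pi_{\{2,3\}}$ (equivalently, on $\Pi_{\{1\}}$). I would verify that the vectors $\delta^{(2)}, \delta^{(3)}$ read off from Table \ref{tab:Response function1} belong to the kernel by rewriting each as a concrete linear combination of Equations \eqref{eq: lp1} and \eqref{eq: lp2}.

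The hard part will be the non-negativity check, since the algebraic freedom above gives many splits in the kernel but only some produce entries in $[0,1]$. Concretely, this reduces to showing $0 \le X \le q_1$ and $0 \le Y \le q_1$ together with $X \le 2\mu_2$ and $Y\le 2\mu_3$, inequalities that depend on the coarse-graining areas $\alpha_a = \tr(\Pi_{\{a\}})/2$ obtained from the spherical integrals depicted in Figure \ref{fig:partition}. This geometric bookkeeping is what I would delegate to the Methods section; the structural content of the observation itself follows once $c$ is identified as a kernel element and the two relations in \eqref{eq: lp2} are recognized as providing the two scalar knobs $X$ and $Y$ needed to distribute the correction between only two rows.
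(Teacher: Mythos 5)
Your structural reading is exactly the paper's: the column deficits of Table \ref{tab:Response function} are $c_{\{a\}}=q_a$, $c_{\{a\}^c}=-q_a$, so the deficit vector is precisely the left-hand side of Eq.~\eqref{eq: lp1} and hence lies in the kernel of $v\mapsto\sum_A v_A\Pi_A$; the paper then splits this kernel element across rows $2$ and $3$ using the two operator identities $Y\Pi_{\{1\}}+q_2\Pi_{\{2\}}=q_3\Pi_{\{3\}^c}+X\Pi_{\{1\}^c}$ and $X\Pi_{\{1\}}+q_3\Pi_{\{3\}}=q_2\Pi_{\{2\}^c}+Y\Pi_{\{1\}^c}$, obtained by combining Eqns.~\eqref{eq: lp1} and \eqref{eq: lp2}, leaving row $1$ untouched. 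Your identification of $X+Y=q_1$ as the column constraint and of Eq.~\eqref{eq: lp2} as supplying the remaining kernel directions is all correct and matches Table \ref{tab:Response function1}.

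The genuine gap is that you defer the one step that actually carries the mathematical weight: proving $X\ge 0$ and $Y\ge 0$. Everything else you list follows for free --- given $X,Y\ge 0$ and $X+Y=q_1=1-2\mu_1$, one gets $X\le q_1=2\mu_2+2\mu_3-1\le 2\mu_2$ (since $\mu_3\le 1/2$) and likewise $Y\le 2\mu_3$, so no further ``geometric bookkeeping'' is needed for the upper bounds. But the lower bounds $X,Y\ge 0$ are equivalent to $\alpha_1q_1+\alpha_3q_3\ge\alpha_2q_2$ and $\alpha_1q_1+\alpha_2q_2\ge\alpha_3q_3$, and these do not follow from the kernel structure alone; they are where the coplanar geometry enters. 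The paper proves them by first computing the closed form $\Pi_{\{a\}},\Pi_{\{a\}^c}$ (Eq.~\eqref{eq:anl POVM}, with $\alpha_a=\theta_a/2\pi$), turning Eq.~\eqref{eq: lp1} into the planar vector identity $\sum_{a}q_a\sin(\theta_a/2)\hat{n}_a=\vec{0}$, and then running a chain of inequalities using $\hat{n}_1\cdot\hat{n}_3=\cos\bigl(\tfrac{\theta_1+\theta_3}{2}\bigr)$, the monotonicity of $\cos$ on $[0,\pi]$, and $\tan x\ge x\ge\sin x$ on $[0,\pi/2)$ to conclude $\theta_1q_1+\theta_2q_2\ge\theta_3q_3$ and its permutations. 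Without this argument (or some substitute for it), the claim that the correction can be confined to two rows while keeping all entries non-negative --- which is the actual content of the Observation --- is not established.
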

Observation \ref{obs:obs} will be critical in extending Theorem \ref{thm:3outcome} to the four-outcome case.
\begin{theorem}
Any noisy four-outcome POVM $\{M^{r=1/2}_a\}_{a=1}^4$ can be simulated by the POVM $\{\Pi_{\hat{n}}\}_{\hat{n}}$.
 \label{thm:4outcome}
\end{theorem}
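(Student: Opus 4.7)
The plan is to generalize the coarse-graining construction from Theorem~\ref{thm:3outcome} to four outcomes and then use Observation~\ref{obs:obs} to renormalize the response function two outcomes at a time. Define $\Pi_A$ for $A\subseteq\{1,2,3,4\}$ by the same formula as Eq.~\eqref{eq:cg effects}, but with the products ranging over all four indices. Four great circles in general position divide the sphere into at most $14$ nonempty regions, so $\{\Pi_A\}_A$ has up to $14$ nonzero effects. The initial unnormalized response $p(a|A)=2\mu_a$ when $a\in A$ and $0$ otherwise already satisfies $\sum_A p(a|A)\Pi_A=M^{r=1/2}_a$ for each $a$ by the same rescaling argument that produced Eq.~\eqref{eq: rescaled respons}; what fails is column normalization, since $\sum_a p(a|A)=2\sum_{a\in A}\mu_a$ is generally not equal to $1$.

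I would carry out the renormalization in two passes. First, group $\{1,2\}$ and consider the three-outcome POVM $\{M^{r=1/2}_1+M^{r=1/2}_2,\,M^{r=1/2}_3,\,M^{r=1/2}_4\}$. The merged effect has Bloch radius at most half its trace, so this POVM lies in the polytope of noisy three-outcome POVMs at noise level $\le 1/2$ and can be written as a convex combination of extremal noisy three-outcome POVMs at $r=1/2$; by Theorem~\ref{thm:3outcome} and convexity of response functions, together with refinement to the common coarse-graining $\{\Pi_A\}_{A\subseteq\{1,2,3,4\}}$, it is simulable by $\{\Pi_A\}_A$. By Observation~\ref{obs:obs} this simulation can be arranged to leave the response for the merged effect untouched and only modify the responses for $M^{r=1/2}_3$ and $M^{r=1/2}_4$. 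In the second pass, I apply the same step to $\{M^{r=1/2}_1,\,M^{r=1/2}_2,\,M^{r=1/2}_3+M^{r=1/2}_4\}$, now holding the responses for $M^{r=1/2}_3$ and $M^{r=1/2}_4$ fixed at the values produced by the first pass and splitting the combined weight between $M^{r=1/2}_1$ and $M^{r=1/2}_2$. Both passes preserve the row identities $M^{r=1/2}_a=\sum_A p(a|A)\Pi_A$, and together they produce a column-normalized four-row response function.

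The main obstacle will be verifying global non-negativity of the renormalized response across all fourteen regions simultaneously. In the three-outcome case, normalization introduced the two closed-form parameters $X$ and $Y$ of Table~\ref{tab:Response function1}, whose signs were straightforwardly controlled by the identities $q_a=1-2\mu_a$ and $\alpha_a=\tr(\Pi_{\{a\}})/2>0$. Here, the two-pass scheme introduces more adjustment parameters and couples several regions at once, so the positivity check becomes a case analysis over the $2^4$ sign patterns of $\hat{m}_a\cdot\hat{n}$. I expect feasibility in each region to reduce to a system of inequalities tightly constrained by $\sum_a\mu_a=1$, $\mu_a\le 1/2$, and $\sum_a\mu_a\hat{m}_a=\vec{0}$; a useful simplification is to exploit the rotational invariance of $\{\Pi_{\hat{n}}\}_{\hat{n}}$ to fix one of the four Bloch directions before checking positivity region by region.
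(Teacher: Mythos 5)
There is a genuine gap in your two-pass renormalization scheme: the two passes do not produce a consistent, column-normalized four-row table. In pass one you simulate $\{M^{r=1/2}_1+M^{r=1/2}_2,\,M^{r=1/2}_3,\,M^{r=1/2}_4\}$ and keep the merged effect ``untouched''; but the response that Theorem~\ref{thm:3outcome} and Observation~\ref{obs:obs} leave untouched for the merged effect is the one built from \emph{its own} Bloch direction, namely $\hat{m}\propto\mu_1\hat{m}_1+\mu_2\hat{m}_2$ with weight $\mu_5=|\mu_1\hat{m}_1+\mu_2\hat{m}_2|$, i.e.\ roughly $2\mu_5\Theta(\hat{m}\cdot\hat{n})$. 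This is \emph{not} equal to the sum $p(1|A)+p(2|A)$ of whatever individual responses pass two later assigns to $M^{r=1/2}_1$ and $M^{r=1/2}_2$ (those are built from $\hat m_1,\hat m_2$ plus normalization corrections). Consequently, when you splice rows $3,4$ from pass one together with rows $1,2$ from pass two, the column sums are $[p(1|A)+p(2|A)]_{\mathrm{pass\,2}}+[p(3|A)+p(4|A)]_{\mathrm{pass\,1}}$, and neither pass certifies that this equals $1$. (Relatedly, in pass two you propose to hold \emph{both} $M^{r=1/2}_3$ and $M^{r=1/2}_4$ fixed at their pass-one values, but Observation~\ref{obs:obs} only permits holding one effect of a three-outcome POVM fixed, and only at its canonical step-function value.) The paper's proof is engineered precisely to avoid this mismatch: it introduces genuine rank-one pseudo effects $M^{r=1/2}_{5^\pm}=\mu_5(\mbb{I}\pm\tfrac12\hat m_5\cdot\vec\sigma)$, splits into $\text{POVM}_\pm$, and uses Observation~\ref{obs:obs} to keep the pseudo-effect responses exactly at $\tfrac{2\mu_5}{\kappa_\pm}\Theta(\pm\hat m_5\cdot\hat n)$. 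On the common $18$-effect refinement (over the six directions $\{\hat m_a\}_{a=1}^4\cup\{\pm\hat m_5\}$, not your $14$-effect one), every nonzero region then carries pseudo weight $q(5^+|A)+q(5^-|A)=2\mu_5$ exactly, which combined with $\kappa_++\kappa_-=1+2\mu_5$ yields $\sum_{a=1}^4q(a|A)=1$. Your merged-effect scheme has no analogue of this handshake.

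Two secondary points. First, your claim that $\{M^{r=1/2}_1+M^{r=1/2}_2,M^{r=1/2}_3,M^{r=1/2}_4\}$ is a convex combination of \emph{extremal} noisy three-outcome POVMs at $r=1/2$ needs an argument (the merged effect is full rank; one must decompose the corresponding unnoisy POVM into extremal components, some of which are PVMs or trivial, and then push the linear noise map through), and after such a decomposition the ``untouched'' responses of the components' first effects point in component-dependent directions, which makes the pass-one bookkeeping even harder to match with pass two. Second, the paper itself only has \emph{numerical} evidence that the $14$-effect coarse-graining over $\{\hat m_a\}_{a=1}^4$ suffices (see its brute-force linear-solver section); the proven construction requires the $18$-effect refinement, so restricting to $14$ effects is an additional unproven assumption in your proposal.
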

\begin{figure}[t]
    \centering
    \includegraphics[width=0.5\textwidth]{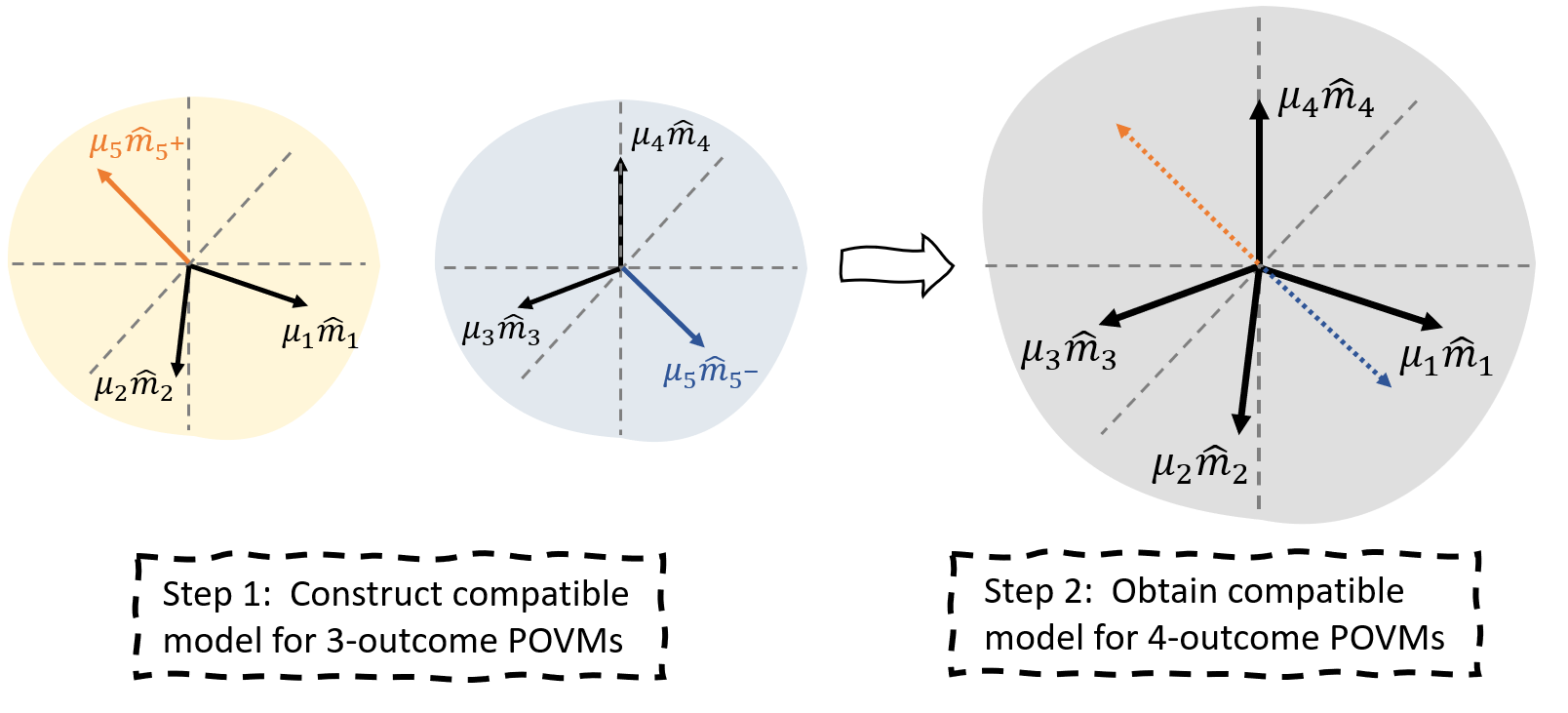}
    \caption{A schematic for constructing a compatible model for a four-outcome measurement $\{M^{r=1/2}_a\}_{a=1}^4$. Step 1: Construct compatible models for $\text{POVM}_+$ and $\text{POVM}_-$  without changing the response function of "pseudo effects" $M^{r=1/2}_{5^\pm}$; Step 2: Build a compatible model for the four-outcome measurement by combining the response function from step 1 and the coarse-grained POVM defined by all six vectors $\{\hat{m}_{a=1}^4\}\cap \{\hat{m}_{5^{\pm}}\}$.} 
    \label{fig:enter-label}
\end{figure}
\noindent The proof in the Methods section has two parts.  In the first part, we introduce two additional ``pseudo effects'' $M^{r=1/2}_{5^\pm}$ defined as
\begin{equation}
        M^{r=1/2}_{5^\pm}:={\mu_5}(\mbb{I}\pm \frac{1}{2}\hat{m}_5\cdot\vec{\sigma}),
\end{equation}
where $\hat{m}_5=-\frac{\mu_1\hat{m}_1+\mu_2\hat{m}_2}{|\mu_1\hat{m}_1+\mu_2\hat{m}_2|}$ and $\mu_5=|\mu_1\hat{m}_1+\mu_2\hat{m}_2|$.  The pseudo effects $M_{5^\pm}$ split the given POVM $\{M^{r=1/2}_a\}_{a=1}^4$ into two new POVMs
    \begin{align}
        \text{POVM}_+&=\frac{1}{\kappa_+}\{M^{r=1/2}_{5^+}, M^{r=1/2}_1,M^{r=1/2}_2\}\notag\\
        \text{POVM}_-&=\frac{1}{\kappa_-}\{M^{r=1/2}_{5^-}, M_3^{r=1/2},M_4^{r=1/2}\},
    \label{eq:POVM+-}
    \end{align}
    where $\kappa_+={\mu_1+\mu_2+\mu_5}$ and $\kappa_-={\mu_3+\mu_4+\mu_5}$.  One can easily verify that these are valid POVMs by the definitions of $(\mu_5,\hat{m}_5)$. 
    
Define the anti-parallel Bloch vectors $\hat{m}_{5^{\pm}}:=\pm\hat{m}_5$ along with outcome sets $\mc{O}_+=\{3,4,5^+\}$ and $\mc{O}_-=\{1,2,5^-\}$. By Theorem \ref{thm:3outcome}, $\text{POVM}_{\pm}$ can be simulated by a coarse-grained POVM $\{\Pi^{\pm}_{A_{\pm}}\}_{A_{\pm}}$ whose effects are
    \begin{align}
\Pi^{\pm}_{A_{\pm}}=\int_{\mc{S}} d\hat{n} \prod_{a\in A_{\pm}}\Theta(\hat{m}_a\cdot\hat{n})\prod_{a'\notin A_{\pm}}(1-\Theta(\hat{m}_{a'}\cdot\hat{n}))\Pi_{\hat{n}},
\label{eq:Akeffect}
\end{align}  
where $A_{\pm}\subset\mc{O}_{\pm}$. The compatible models can be explicitly written as
\begin{equation}
  \frac{1}{\kappa_{\pm}}  M_a^{r=1/2}=\sum_{A_{\pm}\subset \mc{O}_{\pm}}p_{\pm}(a|A_{\pm})\Pi^{\pm}_{A_{\pm}}\quad\text{for $a\in\mc{O}^{\pm}$}.
\end{equation}
Crucially, by Observation~\ref{obs:obs} and Table \ref{tab:Response function1}, the response function for $M_{5^{\pm}}^{r=1/2}/{\kappa_{\pm}}$ can be taken as $p_{\pm}(5^{\pm}|A_{\pm})=2\mu_5/\kappa_{\pm}$ if $5^{\pm}\in A_{\pm}$ and $p_{\pm}(5^{\pm}|A_{\pm})=0$ otherwise.
\par 
In the second part of our construction, we use the combined set of Bloch vectors $\{\hat{m}_a\}_{a=1}^4\cup\{+\hat{m}_5,-\hat{m}_5\}$ to define an 18-effect POVM $\{\Pi_A\}_A$ with
\begin{align}
\Pi_A=\int_{\mc{S}} d\hat{n} \prod_{a\in A}\Theta(\hat{m}_a\cdot\hat{n})\prod_{a'\notin A}(1-\Theta(\hat{m}_{a'}\cdot\hat{n}))\Pi_{\hat{n}},
\label{eq: 18effect}
\end{align}
where $A\subset \{1,2,3,4,5^+,5^-\}$. By comparing Eqns. \eqref{eq:Akeffect} and \eqref{eq: 18effect}, we see that the $\Pi_A$ provide a fine-graining of the $\Pi^{\pm}_{A_\pm}$ such that
\begin{align}
\Pi^{\pm}_{A_\pm}&=\sum_{A\cap\mc{O}_{\pm}=A_{\pm}} \Pi_A.
\end{align} 
Therefore the response functions $p(a|A_{\pm})$ defined for $\text{POVM}_{\pm}$ can be used to define a finer-grained response function, 
\begin{align}
q(a|A):=\kappa_{\pm }p_{\pm}(a|A\cap\mc{O}_{\pm})\quad \text{for $a\in\mc{O}_{\pm}$},
\end{align}
that provides a simulation $M_a=\sum_{A}q(a|A)\Pi_A$.  Verifying the normalization of $q(a|A)$ for arbitrary $A$ relies on the specific construction of $p_{\pm}(a|A)$ in the first step, which is carried out in the Methods section. 

Compared to the three-outcome case, a closed-form expression for the POVM $\{\Pi_A\}_A$ in Theorem \ref{thm:4outcome} cannot be obtained since the spherical integration in Eq.~\ref{eq: 18effect} is not analytically tractable.  Nevertheless, numerical integration can always be applied as we also discuss in the Methods section.  

\medskip

\textbf{Separating PVMs and POVMs in restricted quantum steering  --}
{
We now turn our attention to a slightly different problem.  In the quantum steering setting, suppose that Bob has a fixed state ensemble $\mathfrak{E}=(\rho_\lambda,p(\lambda))$ for his system.  We say that $\mf{E}$ can simulate Alice's measurement $\{M_a\}_a$ on bipartite state $\rho_{AB}$ if there exists a response function $p(a|\lambda)$ such that
\begin{equation}
\tr_A[(M_a\otimes\mbb{I})\rho_{AB}]=\int d\lambda p(\lambda) p(a|\lambda)\rho_\lambda.
\end{equation}
We are interested in how well a \textit{fixed} ensemble $\mf{E}$ can simulate Werner states under PVMs versus POVMs.  Let $r_{\text{PVM}}(\mf{E})$ (resp. $r_{\text{POVM}}(\mf{E})$) denote the largest $r$ such that $\mf{E}$ can simulate any PVM (resp. POVM) on $\rho_W(r)$.  Theorem \ref{thm:4outcome} says that $r_{\text{PVM}}(\mf{E})=r_{\text{POVM}}(\mf{E})=1/2$ when $\mf{E}$ is the ensemble of qubit pure states distributed uniformly on the Bloch sphere.  However, for a general ensemble $\mf{E}$ is it true that $r_{\text{PVM}}(\mf{E})=r_{\text{POVM}}(\mf{E})$?  We show that this is not always the case.  Our argument again uses the correspondence between LHS and compatibility models.  By Lemma \ref{lem:bound} the equivalent question to the one posed here is whether a fixed parent POVM $\{\Pi_\lambda\}_{\lambda}$ can always simulate noisy PVMs versus POVMs at the same noise threshold $r$.  The following proposition says that this is not always possible.}
\begin{proposition}
There exists a five-effect parent POVM $\{\Pi_\lambda\}_{\lambda=1}^5$ that can simulate all PVMs with radius $r\le 0.3714$, whereas a compatible model fails to exist for some three-outcome POVMs with radius $r> 0.3220$
\label{prop:Farka}
\end{proposition}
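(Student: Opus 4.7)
The plan is to exhibit a concrete five-effect parent POVM $\{\Pi_\lambda\}_{\lambda=1}^5$, certify via a linear program that it simulates every noisy PVM up to $r=0.3714$, and then exhibit a single three-outcome noisy POVM whose compatibility LP becomes infeasible once $r>0.3220$. A natural starting point is to place the five Bloch vectors of the $\Pi_\lambda$ in a symmetric configuration (for instance, one polar vector together with four equatorial vectors forming a square), parameterized by the polar angle and the weight on the pole. This two-parameter family is small enough to optimize analytically but asymmetric enough that three-outcome POVMs whose Bloch vectors lie in a plane tilted away from the equator cannot be reproduced as efficiently.

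The PVM half of the proof proceeds as follows. For fixed parent and target direction $\hat{m}$, simulating $M^r_\pm=\tfrac{1}{2}(\mbb{I}\pm r\hat{m}\cdot\vec{\sigma})$ is the linear feasibility problem of finding $\{p_\lambda\in[0,1]\}_{\lambda=1}^5$ with $\sum_\lambda p_\lambda\Pi_\lambda=M^r_+$, which is four real equations in five variables. The rotational symmetry of the chosen parent reduces the supremum over $\hat{m}$ to a one-parameter optimization along a single arc, and the threshold $r=0.3714$ emerges as the largest value for which the parametric LP remains feasible on the worst-case arc; the explicit primal solution then gives the desired response function.

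The POVM half uses Farkas' lemma (as hinted by the proposition's label). Pick a three-outcome extremal POVM $\{M_a=\mu_a(\mbb{I}+\hat{m}_a\cdot\vec{\sigma})\}_{a=1}^3$ whose $\hat{m}_a$ are coplanar in a plane skewed with respect to the parent's symmetry axis — for example, a trine-type configuration tilted off the equator. Compatibility with $\{\Pi_\lambda\}$ requires $p(a|\lambda)\ge 0$, $\sum_a p(a|\lambda)=1$, and $\sum_\lambda p(a|\lambda)\Pi_\lambda=M^r_a$. By Farkas' lemma, infeasibility is equivalent to the existence of Hermitian witnesses $\{X_a\}_{a=1}^3$ with $\sum_a\tr(X_a\Pi_\lambda)\le 0$ for every $\lambda$ and $\sum_a\tr(X_a M^r_a)>0$. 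I would construct such a witness in closed form, with its scalar coefficients tuned so that the borderline where the witness just fails sits at $r=0.3220$. Lemma~\ref{lem:bound} then converts the parent into an ensemble $\mf{E}=(\rho_\lambda,p(\lambda))$ and translates the two LP outcomes into $r_{\text{PVM}}(\mf{E})\ge 0.3714$ and $r_{\text{POVM}}(\mf{E})\le 0.3220$, yielding the desired strict separation.

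The main obstacle is choosing the parent POVM so that the two thresholds actually separate: a highly symmetric parent (e.g., the trine or tetrahedral POVMs) typically yields coincident or swapped bounds, while a generic choice can destroy $r_{\text{PVM}}$ well below $0.3714$. The honest way to locate a good configuration is a min–max search — maximize $r_{\text{PVM}}$ over five-effect parents subject to the constraint that some target three-outcome POVM becomes incompatible at a prescribed $r$ — which is best carried out numerically and then verified analytically by presenting the primal response function from the PVM LP together with the Farkas dual witness for the obstructing POVM.
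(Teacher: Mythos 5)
Your overall strategy coincides with the paper's: exhibit an explicit five-effect parent, certify the PVM threshold by a feasibility computation, and refute simulability of a single three-outcome POVM via a Farkas dual witness. The paper also confirms your closing remark that the parent must be located by optimization: it takes a near-optimal parent from its companion work on PVM compatibility radii --- two antipodal effects on the $x$-axis with \emph{unequal} weights $0.242$ and $0.098$ plus a trine of weight-$0.220$ effects around the negative axis, not your pole-plus-square ansatz --- and it reads off $r_{\mathrm{PVM}}=0.34+0.144\cdot\tfrac{12}{55}\approx 0.3714$ from a closed-form compatibility-radius criterion rather than re-deriving it from a symmetry-reduced LP. The obstructing POVM is the uniform trine in the plane \emph{perpendicular} to the parent's symmetry axis, i.e.\ the most symmetric choice relative to the parent, not a skewed one.

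The one step that would fail as written is your Farkas certificate. The primal has two families of equality constraints --- the operator equations $\sum_\lambda p(a|\lambda)\Pi_\lambda=M_a^r$ \emph{and} the normalizations $\sum_a p(a|\lambda)=1$ --- so the dual vector must carry a scalar $z_\lambda$ for each normalization constraint, and the resulting inequality $\tr(X_a\Pi_\lambda)+z_\lambda\ge 0$ must hold for every pair $(a,\lambda)$ separately. Your stated condition, $\sum_a\tr(X_a\Pi_\lambda)\le 0$ for each $\lambda$ together with $\sum_a\tr(X_a M_a^r)>0$, is neither necessary nor sufficient for infeasibility: for instance $X_1=-X_2$, $X_3=0$ satisfies the first requirement identically while the primal can remain feasible. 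Moreover, dropping the $z_\lambda$ reduces the dual to a certificate for the \emph{subnormalized} simulation problem, and the normalization constraint is exactly what makes three-outcome POVMs harder to simulate than PVMs here. In the paper's witness the operator parts $X_a=-\hat{m}_a\cdot\vec{\sigma}$ are traceless and the nonzero $z_\lambda$ placed on three of the five parent effects are what make the certificate close, giving $\boldsymbol{b}^T\boldsymbol{y}=-r+0.330\sqrt{1-(12/55)^2}<0$ precisely when $r>0.3220$. With the dual corrected to include these normalization multipliers, your plan goes through and is essentially the paper's proof.
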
  
\noindent The counterexample is constructed in the Supplementary Material, and Farka's Lemma is used to prove the non-existence of a compatible model.
\medskip

\textbf{Implications for Bell nonlocality --}  {Bell nonlocality is no weaker than steerability in the sense that every LHS model for one-sided measurements on a bipartite state $\rho_{AB}$ can be converted into a LHV model for two-sided measurements \cite{Wiseman2007, Jones2007}.  Such a connection has been implicitly used by Werner and Barrett \cite{Werner1989, Barrett2002} in deriving their original LHV models for the Werner state with radius $r=1/2$ and $r=5/12$ under PVMs and POVMs respectively.  Since these initial results, a breakthrough was made in the construction of LHV models under PVMs by relating the problem to finding upper bounds on Grothendieck's constant \cite{Acin2006,  Hirsch2017, designolle2023}.  This ensures that $\rho_W(r)$ is Bell local under PVMs whenever $r\leq\approx 0.6875$.   However, for general POVMs this method does not directly apply, and the best known locality bound under POVMs is $r\leq\approx 0.4517$, which has been derived by simulating noisy POVMs with PVMs \cite{Michal2017}.  This leaves a gap in the known locality range of $\rho_W(r)$ for PVMs versus POVMs.  Our Theorem~\ref{thm:4outcome} makes substantial progress toward closing that gap.
\begin{proposition}
There exists an LHV model for general measurements on the Werner state $\rho_W(r)$ when $r\le 1/2$.
\end{proposition}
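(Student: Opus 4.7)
The plan is to derive this proposition as a direct corollary of Theorem~\ref{thm:4outcome}, using the standard fact that every LHS model automatically induces an LHV model on the same bipartite state. I would proceed in two steps: first, upgrade Theorem~\ref{thm:4outcome} to a full LHS model for $\rho_W(r)$ at all $r\le 1/2$ under arbitrary POVMs on Alice's side; second, convert this LHS model into an LHV model by equipping Bob with a Born-rule response function on the hidden states.

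For the first step, Theorems~\ref{thm:3outcome} and~\ref{thm:4outcome}, together with the remark reducing the problem to extremal qubit POVMs, show that the parent POVM $\{\Pi_{\hat{n}}\}_{\hat{n}}$ simulates every noisy POVM $\{M_a^{r=1/2}\}_a$. Lemma~\ref{lem:bound} then converts this compatibility statement into an LHS model for $\rho_W(1/2)$ valid against all POVMs on Alice's side. Extending to $r<1/2$ is immediate: $\rho_W(r')$ is a convex combination of $\rho_W(1/2)$ and $\mbb{I}\otimes\mbb{I}/4$ for any $r'<1/2$, and since the maximally mixed state trivially admits a product LHS model, the LHS property descends to all smaller radii.

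For the second step, suppose Alice measures $\{M_{a|x}\}_a$ and Bob measures $\{N_{b|y}\}_b$ on $\rho_W(r)$. Using the LHS decomposition $\sigma_{a|x}=\int d\lambda\, p(\lambda)p(a|x,\lambda)\rho_\lambda$ from step one, the joint outcome statistics factorize as
\begin{equation}
p(a,b|x,y)=\tr[N_{b|y}\sigma_{a|x}]=\int d\lambda\, p(\lambda)\,p(a|x,\lambda)\,\tr[N_{b|y}\rho_\lambda].
\end{equation}
Defining Bob's local response as $p_B(b|y,\lambda):=\tr[N_{b|y}\rho_\lambda]$, which is a bona fide conditional probability because $\rho_\lambda$ is a normalized state and $\{N_{b|y}\}_b$ is a POVM, exhibits the joint distribution in LHV form. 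Since this argument applies to arbitrary POVMs on both sides, $\rho_W(r)$ admits an LHV model for all $r\le 1/2$.

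There is no real obstacle beyond Theorem~\ref{thm:4outcome} itself: once an LHS model is in hand, producing an LHV model simply requires handing Bob the Born rule applied to the hidden states, a standard step going back to the original steering-versus-nonlocality analyses of Wiseman, Jones, and Doherty. The content of this proposition is therefore essentially a restatement of the implication LHS $\Rightarrow$ LHV, now applied to the improved range $r\le 1/2$ opened up by the present work, thereby narrowing the PVM--POVM gap for Bell locality of two-qubit Werner states.
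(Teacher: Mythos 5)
Your proof is correct and follows essentially the same route as the paper: the paper likewise obtains this proposition by combining Theorem~\ref{thm:4outcome} (via Lemma~\ref{lem:bound}) with the standard LHS~$\Rightarrow$~LHV conversion of Wiseman, Jones, and Doherty, in which Bob's response function is the Born rule applied to the hidden states. Your write-up merely makes explicit the details (convexity for $r<1/2$ and the factorization of $p(a,b|x,y)$) that the paper delegates to the cited references.
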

 
\section{Conclusions}
In this paper, we have derived an exact bound for steering two-qubit Werner states under positive operator-valued measures and closed an open question in the literature \cite{IQOQI2017}.  Our method involves constructing an explicit LHS model for general measurements that requires an unbounded amount of shared randomness for Alice and Bob. However, we also show that in a more restricted quantum steering scenario, PVMs and POVMs can play different roles.

There are a number of open problems related to the subject of our work. First, it is natural to ask whether our results in deriving LHS (or compatibility) models under general measurements can be extended from qubits to systems with arbitrary dimensions. The second interesting problem involves exploring the distinction between PVMs and POVMs in other restricted quantum steering or Bell nonlocality scenarios. For instance, one could consider quantum steering and Bell nonlocality when using finite amounts of shared randomness or when $\rho_{AB}$ is a state with no symmetry.  Finally, even among the Werner family of states, our methods are not strong enough to construct LHV models $r>1/2$.  Consequently, when $r>1/2$ it remains an open question whether PVMs and POVMs are equally powerful for realizing Bell nonlocality.  

\section{Methods}
\subsection{Proof of Theorem 1:} 
\begin{proof}
As explained in Theorem~\ref{thm:3outcome} of the main text, to compute a normalized response function, a coarse-grained POVM $\{\Pi_A\}_A$ is constructed from $\{\Pi_{\hat{n}}=\frac{1}{4\pi}(\mbb{I}+\hat{n}\cdot\vec{\sigma})\}$ based on the given POVM effects
    $$M_a^{r=\frac{1}{2}}=\mu_a(\mbb{I}+\frac{1}{2}\hat{m}_a\cdot\vec{\sigma}), $$
where $\frac{1}{2}\ge \mu_a\ge 0$, $\sum_a\mu_a=0$ and $\sum_a\mu_a\vec{m}_a=\vec{0}$.  To simplify notation going forward, we will remove the superscript $r=1/2$ and write $\wt{M}_a:=M_a^{r=1/2}$.

An analytical expression of the six-effect POVM $\{\Pi_A\}$ defined in Eq. \eqref{eq:cg effects} can be explicitly computed by integrating over six different regions shown in Fig~\ref{fig:partition} (defined by coplanar vector $\{\hat{m}_a\}$).  For example,
\begin{align}
\Pi_{\{1\}}&=\frac{\theta_{1}}{2\pi}(\mbb{I}-\frac{\pi\sin\frac{\theta_{1}}{2}}{2\theta_{1}}\hat{n}_{1}\cdot\vec{\sigma})\\
\Pi_{\{1\}^c}&=\frac{\theta_{1}}{2\pi}(\mbb{I}+\frac{\pi\sin\frac{\theta_{1}}{2}}{2\theta_{1}}\hat{n}_{1}\cdot\vec{\sigma}),
\label{eq:anl POVM}
\end{align}
where $\hat{n}_{_{1}}=\frac{\hat{m}_2+\hat{m}_3}{|\hat{m}_2+\hat{m}_3|}$ and $\theta_{1}=\arccos(-\hat{m}_2\cdot\hat{m}_3)$, and likewise for the other POVM elements.
\par 
The combination of linear dependent relation in Eq.~\ref{eq: lp1} and Eq.~\ref{eq: lp2} give rise to constraints:
\begin{align}
    &Y\Pi_{\{1\}}+q_2\Pi_{\{2\}}=q_3\Pi_{{\{3\}^c}}+X\Pi_{{\{1\}}^c}
    \notag\\   &X\Pi_{\{1\}}+q_3\Pi_{\{3\}}=q_2\Pi_{{\{2\}^c}}+Y\Pi_{{\{1\}}^c},
\end{align}
where
\begin{align}
X=\frac{\alpha_1q_1-\alpha_2q_2+\alpha_3q_3}{2\alpha_1}~~~~
Y=\frac{\alpha_1q_1+\alpha_2q_2-\alpha_3q_3}{2\alpha_1},
\label{eq: lp3}
\end{align}
and $\alpha_i=\tr(\Pi_{\{i\}})/2={\theta_i}/{2\pi}$.  The positivity of $X$ and $Y$ can be verified by rewriting the linear dependence in Eq.~\ref{eq: lp1} with Eq.~\ref{eq:anl POVM} as:
\begin{equation}
    \sum_{a=1}^3 q_a (\Pi_{{\{a\}}^c}-\Pi_{\{a\}})=0\Longrightarrow    \sum_{a=1}^3 q_a\sin(\frac{\theta_{a}}{2})\hat{n}_{a}=\vec{0},
\end{equation}
\noindent 
Using identity $\hat{n}_1\cdot\hat{n}_3=\cos(\frac{\theta_1+\theta_3}{2})$, $\hat{n}_2\cdot\hat{n}_3=\cos(\frac{\theta_2+\theta_3}{2})$ we can now write down the chain of inequalities:
\begin{align}
   \theta_1q_1+\theta_2q_2 &\ge 2[q_1\sin(\frac{\theta_1}{2})+q_2\sin(\frac{\theta_2}{2})]\notag \\
 &\ge 2\frac{q_1\sin(\frac{\theta_1}{2})\cos(\frac{\theta_1+\theta_3}{2})+q_2\sin(\frac{\theta_2}{2})\cos(\frac{\theta_2+\theta_3}{2})}{\cos(\frac{\theta_3}{2})}
   \notag \\
   & = 2\frac{\sin(\frac{\theta_3}{2})q_3}{\cos(\frac{\theta_3}{2})}
   \notag = 2q_3\tan(\frac{\theta_3}{2})\notag\ge\theta_3q_3  
\end{align}
where in the first and last inequalities we use $\tan(x)\ge x\ge\sin(x)$ on $[0,\pi/2)$ and in the second inequality we notice that $\cos(x)$ is decreasing on $[0,\pi]$. Therefore, one can conclude that $Y\ge 0$ and similarly $X\ge 0$. Moreover, since $X+Y=q_1=1-2\mu_1$, we have
\begin{equation}
    0\le X,Y\le 1-2\mu_1.
\end{equation}
Now, starting with the initial response function given in Table \ref{tab:Response function}, it is safe to add the linear dependent relation in Eq. \ref{eq: lp3} as:
\begin{align}
\wt{M}_2&=2\mu_2\Pi_{\{2\}}+2\mu_{2}\Pi_{\{1\}^c}+2\mu_{3}\Pi_{\{3\}^c}\notag\\
&+Y\Pi_{\{1\}}+q_2\Pi_{\{2\}}-q_3\Pi_{{\{3\}^c}}-X\Pi_{{\{1\}}^c}\notag\\ 
&=Y\Pi_{\{1\}}+\Pi_{\{2\}}+(2\mu_2-X)\Pi_{\{1\}^c}+(1-2\mu_1)\Pi_{\{3\}^c} \notag \\
\wt{M}_3&=2\mu_3\Pi_{\{3\}}+2\mu_{3}\Pi_{\{1\}^c}+2\mu_{3}\Pi_{\{2\}^c}\notag \\
&+X\Pi_{\{1\}}+q_3\Pi_{\{3\}}-q_2\Pi_{{\{2\}^c}}-Y\Pi_{{\{1\}}^c}
\notag\\ 
&=X\Pi_{\{1\}}+\Pi_{\{3\}}+(2\mu_3-Y)\Pi_{{\{1\}
^c}}+(1-2\mu_1)\Pi_{{\{2\}^c}}
\notag
\end{align}
where $2\mu_2-X\ge 2\mu_2+2\mu_1-1=1-2\mu_3\ge 0 $ and  $2\mu_3-Y\ge 2\mu_3+2\mu_1-1=1-2\mu_2\ge 0 $.
\par 
Therefore, the new response function $p'(a|A)$ for $M_a$ with $a=2$ and $a=3$ are well-defined and are normalized.  For $a=1$, the response function remains the same and
\begin{equation}
    \wt{M}_1=2\mu_1\Pi_{\{1\}}+2\mu_1\Pi_{\{2\}^c}+2\mu_1\Pi_{\{3\}^c}
\end{equation}

\end{proof}
\subsection{Proof of Theorem 2:}
\begin{proof}
We continue to use the notation $\wt{M}_a:=M_a^{r=1/2}$.  First, by introducing a pair of "pseudo effects" $ \wt{M}_{5^{\pm}}={\mu_5}(\mbb{I}\pm \frac{1}{2}\hat{m}_5\cdot\vec{\sigma})$, with $\hat{m}_5=-\frac{\mu_1\hat{m}_1+\mu_2\hat{m}_2}{|\mu_1\hat{m}_1+\mu_2\hat{m}_2|}$ and $\mu_5=|\mu_1\hat{m}_1+\mu_2\hat{m}_2|$. It is easy to verify that $\text{POVM}_{\pm}$ defined in Eq.~\ref{eq:POVM+-} are valid POVMs.  Also, note that 
$$\mu_5\hat{m}=\mu_3\hat{m}_3+\mu_4\hat{m}_4=-\mu_1\hat{m}_1-\mu_2\hat{m}_2,$$
and so $\kappa_{\pm}\le \sum_{a=1}^{4} \mu_a=1$ by the triangle inequality.
\par Both $\text{POVM}_+$ and $\text{POVM}_-$ have three effects; thus the construction in Theorem~\ref{thm:3outcome} allows us to build coarse-grained POVMs $\{\Pi^{\pm}_{A_{\pm}}\}$ and normalized response functions $p_{\pm}(a|A_{\pm})$ to simulate  $\text{POVM}_\pm$.  Explicitly, we have
\begin{align}
\Pi^{\pm}_{A_{\pm}}=\int_{\mc{S}} d\hat{n} \prod_{a\in A_{\pm}}\Theta(\hat{m}_a\cdot\hat{n})\prod_{a'\notin A_{\pm}}(1-\Theta(\hat{m}_{a'}\cdot\hat{n}))\Pi_{\hat{n}},
\label{eq:6effectpm}
\end{align} 
with $A_{\pm}\subset\mc{O}_{\pm}$
and $\mc{O}_+=\{5^+,1,2\}$ and $\mc{O}_-=\{5^-,3,4\}$.
For $\text{POVM}_+$, the response function $p_+(a|A_+)$ is given by
\begin{table}[h]
    \centering
    \begin{tabular}{c|c|c|c|c|c|c}
    \hline
     & $\Pi^+_{\{1,2\}}$  & $\Pi^+_{\{2,5^+\}}$  & $\Pi^+_{\{1,5^+\}}$ & $\Pi^+_{\{2\}}$ & $\Pi^+_{\{1\}}$ & $\Pi^+_{\{5^+\}}$     \\
    \hline
    $\frac{1}{\kappa_+}\wt{M}_{5^{+}} $ & 0 &$\frac{2}{\kappa_+}\mu_5$ & $\frac{2}{\kappa_+}\mu_5$& 0 & 0& $\frac{2}{\kappa_+}\mu_5$\\ 
    $\frac{1}{\kappa_+}\wt{M}_{1}$ & $\frac{2}{\kappa_+}\mu_1-X_+$ & 0 & $1-\frac{2}{\kappa_+}\mu_5$& 0 & $1$& $Y_+$\\ 
    $\frac{1}{\kappa_+}\wt{M}_{2}$ &$\frac{2}{\kappa_+}\mu_2-Y_+$ &$1-\frac{2}{\kappa_+}\mu_5$ & 0& $1$ & 0& $X_+$\\ 
    \hline
    \end{tabular}
    \label{tab:Response function3}
    \caption{}
\end{table}\\
where
\begin{align}
X_+=\frac{\beta_{5^+}q_{5^+}+\beta_1q_1-\beta_2q_2}{2\beta_{5^+}},~~~~Y_+=\frac{\beta_{5^+}q_{5^+}-\beta_1q_1+\beta_2q_2}{2\beta_{5^+}}
\end{align}
with $\beta_a=\frac{\tr{\Pi^+_{\{a\}}}}{2}$ and $q_a=1-\frac{2}{\kappa_+}\mu_a$ $a\in\mc{O}_+$.

For $\text{POVM}_-$, we likewise have $p_-(a|A_-)$ as
\begin{table}[h]
    \centering
    \begin{tabular}{c|c|c|c|c|c|c}
    \hline
     & $\Pi^-_{\{3,4\}}$  & $\Pi^-_{\{4,5^-\}}$ & $\Pi^-_{\{3,5^-\}}$ & $\Pi^-_{\{4\}}$ & $\Pi^-_{\{3\}}$ & $\Pi^-_{\{5^-\}}$     \\
    \hline
    $\frac{1}{\kappa_-}\wt{M}_{5^-}$ & 0 &$\frac{2}{\kappa_-}\mu_5$ & $\frac{2}{\kappa_-}\mu_5$& 0 & 0& $\frac{2}{\kappa_-}\mu_5$\\ 
    $\frac{1}{\kappa_-}\wt{M}_{3}$ & $\frac{2}{\kappa_-}\mu_3-X_-$ & 0 & $1-\frac{2}{\kappa_-}\mu_5$& 0 & $1$ & $Y_-$\\ 
    $\frac{1}{\kappa_-}\wt{M}_{4}$ &$\frac{2}{\kappa_-}\mu_4-Y_-$ &$1-\frac{2}{\kappa_-}\mu_5$ & 0& $1$ & 0& $X_-$\\ 
    \hline
    \end{tabular}
    \label{tab:Response function2}
    \caption{}
\end{table} 
\\where 
\begin{align}
    X_-&=\frac{\gamma_{5^-}q_{5^-}+\gamma_3q_3-\gamma_4q_4}{2\gamma_{5^-}}~~~~
Y_-=\frac{\gamma_{5^-}q_{5^-}-\gamma_3q_3+\gamma_4q_4}{2\gamma_{5^-}}
\end{align}
with $\gamma_a=\frac{\tr{\Pi^-_{\{a\}}}}{2}$  and $q_a=1-\frac{2}{\kappa_-}\mu_a$ for for $a\in\mc{O}_-$\par 

These constructions follow from Table \ref{tab:Response function1} and ultimately from the fact that the values for $p_{\pm}(5^{\pm}|A_{\pm})$ can remain unchanged during the renormalization (Observation \ref{obs:obs}).\par 

The second part of the proof relies on the construction of an 18-effect POVM $\{\Pi_A\}_A$ defined by the set of vector $\{\hat{m}_a\}_{a=1}^4\cap\{\pm \hat{m}_5\}$
\begin{align}
\Pi_A=\int_{\mc{S}} d\hat{n} \prod_{a\in A}\Theta(\hat{m}_a\cdot\hat{n})\prod_{a'\notin A}(1-\Theta(\hat{m}_{a'}\cdot\hat{n}))\Pi_{\hat{n}}
\label{eq: 18effectms}
\end{align}
where $A\subset \{1,2,3,4,5^+,5^-\}$. Note that $\Pi_A=0$ if $A$ or $A^c$ enumerates a set of Bloch vectors containing the origin in its convex hull.  In particular, $\Pi_A=0$ if either $\{5^+,5^-\}\subset A$ or $\{5^+,5^-\}\subset A^c$.  The set of nonzero POVM $\Pi_A$ are
\begin{equation}
\begin{bmatrix}
  \Pi_{\{1,5^-\}} & \Pi_{\{2,5^-\}} & \Pi_{\{1,3,5^-\}} & \Pi_{\{2,3,5^-\}} & \Pi_{\{1,2,5^-\}} \\   
  \Pi_{\{3,5^+\}} & \Pi_{\{4,5^+\}} &   \Pi_{\{1,4,5^+\}} & \Pi_{\{2,4,5^+\}}  & \Pi_{\{1,3,5^+\}} \\
   \Pi_{\{1,4,5^-\}} & \Pi_{\{2,4,5^-\}}& \Pi_{\{1,2,4,5^-\}} & \Pi_{\{1,2,3,5^-\}}  \\
    \Pi_{\{2,3,5^+\}} & \Pi_{\{3,4,5^+\}} & \Pi_{\{1,3,4,5^+\}}  & \Pi_{\{2,3,4,5^+\}}
\end{bmatrix}
\label{eq:18nonzero}
\end{equation}
By comparing Eqns. \eqref{eq:6effectpm} and \eqref{eq: 18effectms}, we see that the $\Pi_A$ provide a fine-graining of the $\Pi^{\pm}_{A_\pm}$ such that
\begin{align}
\Pi^{\pm}_{A_\pm}&=\sum_{A\cap\mc{O}_{\pm}=A_{\pm}} \Pi_A.
\end{align} 
In more detail,
\begin{equation}
\begin{split}
\Pi^+_{\{1\}}&=\Pi_{\{1,5^-\}}+\Pi_{\{1,4,5^-\}}+\Pi_{\{1,3,5^-\}}\\
\Pi^+_{\{2\}}&=\Pi_{\{2,5^-\}}+\Pi_{\{2,4,5^-\}}+\Pi_{\{2,3,5^-\}}\\
\Pi^+_{\{5^+\}}&=\Pi_{\{4,5^+\}}+\Pi_{\{3,5^+\}}+\Pi_{\{3,4,5^+\}} \\
\Pi^+_{\{1,2\}}&=\Pi_{\{1,2,5^-\}}+\Pi_{\{1,2,3,5^-\}}+\Pi_{\{1,2,4,5^-\}} \\
\Pi^+_{\{1,5^+\}}&=\Pi_{\{1,3,5^+\}}+\Pi_{\{1,4,5^+\}}+\Pi_{\{1,3,4,5^+\}} \\
\Pi^+_{\{2,5^+\}}&=\Pi_{\{2,3,5^+\}}+\Pi_{\{2,4,5^+\}}+\Pi_{\{2,3,4,5^+\}} 
\end{split}
\end{equation}
and 
\begin{equation}
\begin{split}
\Pi^-_{\{3\}}&=\Pi_{\{3,5^+\}}+\Pi_{\{1,3,5^+\}}+\Pi_{\{2,3,5^+\}}\\
\Pi^-_{\{4\}}&=\Pi_{\{4,5^+\}}+\Pi_{\{1,4,5^+\}}+\Pi_{\{2,4,5^+\}}\\
\Pi^-_{\{5^-\}}&=\Pi_{\{1,5^-\}}+\Pi_{\{2,5^-\}}+\Pi_{\{1,2,5^-\}}\\
\Pi^-_{\{3,4\}}&=\Pi_{\{3,4,5^+\}}+\Pi_{\{1,3,4,5^+\}}+\Pi_{\{2,3,4,5^+\}}\\
\Pi^-_{\{3,5^-\}}&=\Pi_{\{2,3,5^-\}}+\Pi_{\{1,3,5^-\}}+\Pi_{\{1,2,3,5^-\}}\\
\Pi^-_{\{4,5^-\}}&=\Pi_{\{2,4,5^-\}}+\Pi_{\{1,4,5^-\}}+\Pi_{\{1,2,4,5^-\}}\\
\end{split} 
\end{equation}
Substituting this into the simulations of $\text{POVM}_{\pm}$ using Table~\ref{tab:Response function2} and Table~\ref{tab:Response function3} 
yields \begin{equation}
    M_a=\sum_{A}q(a|A)\Pi_A,
\end{equation}
where for nonzero $\Pi_A$ listed in Eq.~\ref{eq:18nonzero} we have
\begin{align}
q(a|A):=\kappa_{\pm}p_{\pm}(a|A\cap\mc{O}_{\pm})\quad \text{for $a\in\mc{O}_{\pm}$}.
\end{align}
Since $0\le \kappa_{\pm}\le 1$, these response functions are positive and bounded by one. Moreover, the normalization for each $A$ can be checked by noticing that
\begin{align}
\sum_{a=1}^4q(a|A)+q(5^+|A)+q(5^-|A)=\kappa_++\kappa_-=1+2\mu_5.\notag
\end{align}
Since $\Pi_A$ will be nonzero only if either $5^+\in A\cap \mc{O}_+$ or $5^-\in A\cap\mc{O}_-$ but not both, one has $q(5^+|A)+q(5^-|A)=2\mu_5$.  Comparing this with the previous equation shows that $\sum_{a=1}^4q(a|A)=1$, as desired.
\end{proof}
\subsection{Disputing compatible model with Farka's lemma}
We continue to use the notation $\wt{M}_a:=M_a^{r=1/2}$.  The condition for simulating children measurement $\{\wt{M}_a\}\}_a^m$ with measurement $\{\Pi_i\}_i^n$ can be expressed as a set of linear equations:
\begin{align}
    \sum_i x_{a|i} \Pi_i=\wt{M}_{a}~~~~\text{for $a=1,\cdots m$} \notag \\ 
    \sum_a x_{a|i}=1 ~~~~\text{for $i=1,\cdots n$}.
    \label{eq: compatible equation}
\end{align}
By writing qubit measurements as a vector in $\mbb{R}^4$, i.e., 
\begin{align}
    \wt{M}_a=\mu_a(\mbb{I}+1/2\hat{m}_a\cdot\vec{\sigma})
    &\Longrightarrow \vec{m}_a=(\mu_a,1/2\mu_a\hat{m}_a)^T\notag \\
    \Pi_i=\alpha_i(\mbb{I}+\hat{n}_i\cdot\vec{\sigma}) &\Longrightarrow \vec{\pi}=(\alpha_i,\alpha_i\vec{n}_i)^T,
\end{align}
the set of equations in Eq.~\ref{eq: compatible equation}} becomes  
$A\cdot \boldsymbol{x}=\boldsymbol{b}$ with $\boldsymbol{x}\geq 0$ and
\setcounter{MaxMatrixCols}{20}
\begin{align}
    A&=\begin{pmatrix} \vec{\pi}_1&\cdots&\vec{\pi}_n&0&\cdots&0&0&0&0\\
& \ddots &&& \ddots &&& \ddots &\\ 0&0&0&0&\cdots&0&\vec{\pi}_1&\cdots&\vec{\pi}_n\\
    1&\cdots  &0&1&\cdots    &0&1&\cdots  &0\\
    &\ddots  &&&\ddots  &&&\ddots  &\\
    0&\cdots  &1&0&\cdots  &1& 0&\cdots  &1\end{pmatrix},  \notag\\
    \boldsymbol{x}&=(x_{1|1},\cdots,x_{1|n} ,x_{2|n},\cdots, x_{m|n})^T, \notag\\
    {b}&=(\vec{m}_1,\cdots,\vec{m}_m, 1,\cdots, 1)^T.
    \label{eq: linear equations}
\end{align}
\begin{lemma}[Farkas' Lemma] 
\begin{equation}
  \nexists \boldsymbol{x}\ge \boldsymbol{0} \text{ s.t. } A\cdot \boldsymbol{x}=\boldsymbol{b} \Longleftrightarrow \exists y \text{ s.t. } A^T\cdot \boldsymbol{y}\ge \boldsymbol{0} \text{ and }\boldsymbol{b}^T\cdot \boldsymbol{y}<0
\end{equation}
where $\boldsymbol{y}\in\mbb{R}^{4m+n}$ as $y=(\vec{y}_1,\cdots,\vec{y}_m,z_1,\cdots,z_n)^T$. 
\end{lemma}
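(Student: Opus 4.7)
The plan is to prove Farkas' Lemma by the classical separating hyperplane argument, a standard result from convex analysis. I would first dispatch the easy direction: if both alternatives held simultaneously, there would exist $\boldsymbol{x}\ge \boldsymbol{0}$ with $A\boldsymbol{x}=\boldsymbol{b}$ and $\boldsymbol{y}$ with $A^T\boldsymbol{y}\ge \boldsymbol{0}$, forcing $\boldsymbol{b}^T\boldsymbol{y}=(A\boldsymbol{x})^T\boldsymbol{y}=\boldsymbol{x}^T(A^T\boldsymbol{y})\ge 0$, which directly contradicts $\boldsymbol{b}^T\boldsymbol{y}<0$. So it suffices to establish the contrapositive of the nontrivial direction: if no nonnegative solution to $A\boldsymbol{x}=\boldsymbol{b}$ exists, then a separating $\boldsymbol{y}$ must exist.

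For the nontrivial direction, introduce the convex cone $C:=\{A\boldsymbol{x}:\boldsymbol{x}\ge \boldsymbol{0}\}$ generated by the columns of $A$, so that by hypothesis $\boldsymbol{b}\notin C$. Assuming $C$ is closed, the hyperplane separation theorem applied to the closed convex set $C$ and the external point $\boldsymbol{b}$ yields a vector $\boldsymbol{y}$ and scalar $\alpha$ with $\boldsymbol{y}^T\boldsymbol{c}\ge \alpha$ for every $\boldsymbol{c}\in C$ and $\boldsymbol{y}^T\boldsymbol{b}<\alpha$. Plugging in $\boldsymbol{c}=\boldsymbol{0}\in C$ gives $\alpha\le 0$. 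Moreover, since $C$ is a cone, $t\boldsymbol{c}\in C$ for every $\boldsymbol{c}\in C$ and $t\ge 0$; if $\boldsymbol{y}^T\boldsymbol{c}$ were strictly negative for some $\boldsymbol{c}\in C$, then taking $t\to\infty$ would drive $\boldsymbol{y}^T(t\boldsymbol{c})$ to $-\infty$, violating the uniform lower bound $\alpha$. Therefore $\boldsymbol{y}^T\boldsymbol{c}\ge 0$ on $C$, and applying this to each column of $A$ delivers $A^T\boldsymbol{y}\ge \boldsymbol{0}$. Combined with $\boldsymbol{y}^T\boldsymbol{b}<\alpha\le 0$, this produces the certificate required by the second alternative.

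The main obstacle is establishing closedness of the finitely generated cone $C$, since $C$ is the image of the unbounded nonnegative orthant $\mbb{R}^n_{\ge 0}$ under a linear map, and such images are not automatically closed. My approach would be to invoke the conic version of Carath\'eodory's theorem: every element of $C$ admits a representation as a nonnegative combination of a \emph{linearly independent} subset of the columns of $A$. This yields the finite decomposition $C=\bigcup_{I}C_I$, where $I$ ranges over index sets whose corresponding columns are linearly independent and $C_I$ denotes their nonnegative combinations. Each $C_I$ is the image of $\mbb{R}^{|I|}_{\ge 0}$ under an injective linear map, so the preimage of any bounded set is bounded, making each $C_I$ closed. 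A finite union of closed sets is closed, so $C$ is closed, completing the argument. Alternatively, one could appeal directly to the Minkowski--Weyl theorem that finitely generated cones are polyhedral (hence closed), or derive the lemma as an immediate corollary of strong linear programming duality.
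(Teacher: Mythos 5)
Your proof is correct. Note, however, that the paper does not actually prove this lemma at all: it is stated as the classical Farkas alternative and simply invoked as a known result to certify the non-existence of a compatible model, so there is no in-paper argument to compare against. Your write-up is a complete, self-contained proof along the standard lines: the trivial direction via $\boldsymbol{b}^T\boldsymbol{y}=\boldsymbol{x}^T(A^T\boldsymbol{y})\ge 0$, and the nontrivial direction via strict separation of $\boldsymbol{b}$ from the cone $C=\{A\boldsymbol{x}:\boldsymbol{x}\ge\boldsymbol{0}\}$, using $\boldsymbol{0}\in C$ to force $\alpha\le 0$ and the conic scaling argument to upgrade the bound $\boldsymbol{y}^T\boldsymbol{c}\ge\alpha$ to $\boldsymbol{y}^T\boldsymbol{c}\ge 0$. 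You also correctly identify and close the one genuinely delicate step, the closedness of a finitely generated cone, via conic Carath\'eodory and the properness of injective linear maps on the nonnegative orthant; this is exactly the point that careless separating-hyperplane proofs gloss over. One cosmetic remark: the lemma as stated in the paper fixes $\boldsymbol{y}\in\mbb{R}^{4m+n}$ with a particular block structure inherited from the steering application, but your proof is for the general finite-dimensional statement, which subsumes it; nothing needs to change.
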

Therefore, to show the non-existence of a set of positive, normalized response function $\{x_{a|i}\}_{i,a}$ for simulating $\{\wt{M}_a\}$ with $\{\Pi_i\}$, it suffices to find a vector $\boldsymbol{y}$ such that $A^T\cdot \boldsymbol{y}\ge \boldsymbol{0} \text{ and }\boldsymbol{b}^T\cdot \boldsymbol{y}<0$ holds at the same time. \par 

In the Supplementary Material, we will use Farka's lemma to dispute the existence of a compatible model with a restricted setting (fixed finite-outcome POVM) as we state in Prop.~\ref{prop:Farka}.
\subsection{Numerical methods}
As mentioned in the main text, compared to the three-outcome measurements, a closed-form expression for the POVM $\{\Pi_A\}_A$ given in Eq.~\ref{eq: 18effect} is not analytically tractable. Here we briefly introduce the spherical numerical integration methods we used in our computational appendix \cite{zhang2023gita}. \par 
Integration in Eq.~\ref{eq: 18effect} can be rewritten as:
\begin{align}
\hat{\pi}_A=\int_{\mc{S}} d\hat{n} \prod_{a\in A}\Theta(\hat{m}_a\cdot\hat{n})\prod_{a'\notin A}(1-\Theta(\hat{m}_{a'}\cdot\hat{n}))\hat{\pi}_{\hat{n}}
\label{eq: 18effect-4vector}
\end{align}
where $\hat{\pi}_{\hat{n}}=1/4\pi(1,\hat{n})^T$ and $\hat{\pi}_A$ is the 4-vector representation of $\Pi_A$. \par 

To perform the spherical integration above, we use the 131th order Lebedev quadrature for the numerical computation \cite{Lebedev1999}, and the integration can be approximated as:
\begin{align}
\hat{\pi}_A=4\pi\sum_{i=1}^N \omega_i \prod_{a\in A}\Theta(\hat{m}_a\cdot\hat{l}_i)\prod_{a'\notin A}(1-\Theta(\hat{m}_{a'}\cdot\hat{l}_i))\hat{\pi}_{\hat{l}_i}
\label{eq: Lebedev}
\end{align}
where $\omega_i, \hat{l}_i$ are the so-called Lebedev weights and Lebedev grid. \par 
\section{Brute force linear solver}
A query that might be raised is whether there exists a compatible model for $\{M_a^{r=1/2}\}_{a=1}^4$ that can be constructed with a $14$-effect POVM, defined by:
\begin{align}
\Pi_A=\int_{\mc{S}} d\hat{n} \prod_{a\in A}\Theta(\hat{m}_a\cdot\hat{n})\prod_{a'\notin A}(1-\Theta(\hat{m}_{a'}\cdot\hat{n}))\Pi_{\hat{n}}
\label{eq: 14-effect}
\end{align}
where $A\subset \{1,2,3,4\}$ with $\Pi_{\emptyset}=\Pi_{\{1,2,3,4\}}=0$.\par 
It is important to note that, in general, the integration mentioned above cannot be computed analytically. This stands in contrast to cases where $\{\hat{m}_a\}_{a=1}^3$ lie in the same plane. Besides, the construction of response functions relies on some knowledge of these effects, and due to this complexity, performing an analytical treatment for such coarse-graining and simulation is not straightforward.

Nonetheless, a numerical approach remains viable. This involves solving the linear equations $A \cdot \boldsymbol{x} = \boldsymbol{b}$ as defined in Eq.~\ref{eq: linear equations}. The solution is subject to the constraint $\boldsymbol{x} \ge 0$, which can be achieved through a linear programming procedure outlined below:
\begin{align}
    \min~~ & 0 \notag\\
    \text{s.t.}~~& A\cdot \boldsymbol{x}=\boldsymbol{b}\notag \\
     & \boldsymbol{x}\ge \boldsymbol{0}
\end{align}
and strong numerical evidence is provided in our numerical appendix \cite{zhang2023gita} showing that there is no example of a POVM that cannot be simulated with a $14$-effect POVM having the coarse-graining defined in Eq.~\ref{eq: 14-effect}.  
\bibliography{ref1}

\begin{thebibliography}{50}%
\makeatletter
\providecommand \@ifxundefined [1]{%
 \@ifx{#1\undefined}
}%
\providecommand \@ifnum [1]{%
 \ifnum #1\expandafter \@firstoftwo
 \else \expandafter \@secondoftwo
 \fi
}%
\providecommand \@ifx [1]{%
 \ifx #1\expandafter \@firstoftwo
 \else \expandafter \@secondoftwo
 \fi
}%
\providecommand \natexlab [1]{#1}%
\providecommand \enquote  [1]{``#1''}%
\providecommand \bibnamefont  [1]{#1}%
\providecommand \bibfnamefont [1]{#1}%
\providecommand \citenamefont [1]{#1}%
\providecommand \href@noop [0]{\@secondoftwo}%
\providecommand \href [0]{\begingroup \@sanitize@url \@href}%
\providecommand \@href[1]{\@@startlink{#1}\@@href}%
\providecommand \@@href[1]{\endgroup#1\@@endlink}%
\providecommand \@sanitize@url [0]{\catcode `\\12\catcode `\$12\catcode `\&12\catcode `\#12\catcode `\^12\catcode `\_12\catcode `\%12\relax}%
\providecommand \@@startlink[1]{}%
\providecommand \@@endlink[0]{}%
\providecommand \url  [0]{\begingroup\@sanitize@url \@url }%
\providecommand \@url [1]{\endgroup\@href {#1}{\urlprefix }}%
\providecommand \urlprefix  [0]{URL }%
\providecommand \Eprint [0]{\href }%
\providecommand \doibase [0]{https://doi.org/}%
\providecommand \selectlanguage [0]{\@gobble}%
\providecommand \bibinfo  [0]{\@secondoftwo}%
\providecommand \bibfield  [0]{\@secondoftwo}%
\providecommand \translation [1]{[#1]}%
\providecommand \BibitemOpen [0]{}%
\providecommand \bibitemStop [0]{}%
\providecommand \bibitemNoStop [0]{.\EOS\space}%
\providecommand \EOS [0]{\spacefactor3000\relax}%
\providecommand \BibitemShut  [1]{\csname bibitem#1\endcsname}%
\let\auto@bib@innerbib\@empty
\bibitem [{\citenamefont {Horodecki}\ \emph {et~al.}(2009)\citenamefont {Horodecki}, \citenamefont {Horodecki}, \citenamefont {Horodecki},\ and\ \citenamefont {Horodecki}}]{Horodecki2009}%
  \BibitemOpen
  \bibfield  {author} {\bibinfo {author} {\bibfnamefont {R.}~\bibnamefont {Horodecki}}, \bibinfo {author} {\bibfnamefont {P.}~\bibnamefont {Horodecki}}, \bibinfo {author} {\bibfnamefont {M.}~\bibnamefont {Horodecki}},\ and\ \bibinfo {author} {\bibfnamefont {K.}~\bibnamefont {Horodecki}},\ }\bibfield  {title} {\bibinfo {title} {Quantum entanglement},\ }\href {https://doi.org/10.1103/RevModPhys.81.865} {\bibfield  {journal} {\bibinfo  {journal} {Rev. Mod. Phys.}\ }\textbf {\bibinfo {volume} {81}},\ \bibinfo {pages} {865} (\bibinfo {year} {2009})}\BibitemShut {NoStop}%
\bibitem [{\citenamefont {Einstein}\ \emph {et~al.}(1935)\citenamefont {Einstein}, \citenamefont {Podolsky},\ and\ \citenamefont {Rosen}}]{Einstein1935}%
  \BibitemOpen
  \bibfield  {author} {\bibinfo {author} {\bibfnamefont {A.}~\bibnamefont {Einstein}}, \bibinfo {author} {\bibfnamefont {B.}~\bibnamefont {Podolsky}},\ and\ \bibinfo {author} {\bibfnamefont {N.}~\bibnamefont {Rosen}},\ }\bibfield  {title} {\bibinfo {title} {Can quantum-mechanical description of physical reality be considered complete?},\ }\href {https://doi.org/10.1103/PhysRev.47.777} {\bibfield  {journal} {\bibinfo  {journal} {Phys. Rev.}\ }\textbf {\bibinfo {volume} {47}},\ \bibinfo {pages} {777} (\bibinfo {year} {1935})}\BibitemShut {NoStop}%
\bibitem [{\citenamefont {Bell}(1964)}]{Bell1964}%
  \BibitemOpen
  \bibfield  {author} {\bibinfo {author} {\bibfnamefont {J.~S.}\ \bibnamefont {Bell}},\ }\bibfield  {title} {\bibinfo {title} {On the einstein podolsky rosen paradox},\ }\href {https://doi.org/10.1103/PhysicsPhysiqueFizika.1.195} {\bibfield  {journal} {\bibinfo  {journal} {Physics Physique Fizika}\ }\textbf {\bibinfo {volume} {1}},\ \bibinfo {pages} {195} (\bibinfo {year} {1964})}\BibitemShut {NoStop}%
\bibitem [{\citenamefont {Brunner}\ \emph {et~al.}(2014)\citenamefont {Brunner}, \citenamefont {Cavalcanti}, \citenamefont {Pironio}, \citenamefont {Scarani},\ and\ \citenamefont {Wehner}}]{Brunner2014}%
  \BibitemOpen
  \bibfield  {author} {\bibinfo {author} {\bibfnamefont {N.}~\bibnamefont {Brunner}}, \bibinfo {author} {\bibfnamefont {D.}~\bibnamefont {Cavalcanti}}, \bibinfo {author} {\bibfnamefont {S.}~\bibnamefont {Pironio}}, \bibinfo {author} {\bibfnamefont {V.}~\bibnamefont {Scarani}},\ and\ \bibinfo {author} {\bibfnamefont {S.}~\bibnamefont {Wehner}},\ }\bibfield  {title} {\bibinfo {title} {Bell nonlocality},\ }\href {https://doi.org/10.1103/RevModPhys.86.419} {\bibfield  {journal} {\bibinfo  {journal} {Rev. Mod. Phys.}\ }\textbf {\bibinfo {volume} {86}},\ \bibinfo {pages} {419} (\bibinfo {year} {2014})}\BibitemShut {NoStop}%
\bibitem [{\citenamefont {Aspect}\ \emph {et~al.}(1981)\citenamefont {Aspect}, \citenamefont {Grangier},\ and\ \citenamefont {Roger}}]{Aspect1981}%
  \BibitemOpen
  \bibfield  {author} {\bibinfo {author} {\bibfnamefont {A.}~\bibnamefont {Aspect}}, \bibinfo {author} {\bibfnamefont {P.}~\bibnamefont {Grangier}},\ and\ \bibinfo {author} {\bibfnamefont {G.}~\bibnamefont {Roger}},\ }\bibfield  {title} {\bibinfo {title} {Experimental tests of realistic local theories via bell's theorem},\ }\href {https://doi.org/10.1103/PhysRevLett.47.460} {\bibfield  {journal} {\bibinfo  {journal} {Phys. Rev. Lett.}\ }\textbf {\bibinfo {volume} {47}},\ \bibinfo {pages} {460} (\bibinfo {year} {1981})}\BibitemShut {NoStop}%
\bibitem [{\citenamefont {Clauser}\ \emph {et~al.}(1969)\citenamefont {Clauser}, \citenamefont {Horne}, \citenamefont {Shimony},\ and\ \citenamefont {Holt}}]{Clauser1969}%
  \BibitemOpen
  \bibfield  {author} {\bibinfo {author} {\bibfnamefont {J.~F.}\ \bibnamefont {Clauser}}, \bibinfo {author} {\bibfnamefont {M.~A.}\ \bibnamefont {Horne}}, \bibinfo {author} {\bibfnamefont {A.}~\bibnamefont {Shimony}},\ and\ \bibinfo {author} {\bibfnamefont {R.~A.}\ \bibnamefont {Holt}},\ }\bibfield  {title} {\bibinfo {title} {Proposed experiment to test local hidden-variable theories},\ }\href {https://doi.org/10.1103/PhysRevLett.23.880} {\bibfield  {journal} {\bibinfo  {journal} {Phys. Rev. Lett.}\ }\textbf {\bibinfo {volume} {23}},\ \bibinfo {pages} {880} (\bibinfo {year} {1969})}\BibitemShut {NoStop}%
\bibitem [{\citenamefont {Ekert}(1991)}]{Ekert1992}%
  \BibitemOpen
  \bibfield  {author} {\bibinfo {author} {\bibfnamefont {A.~K.}\ \bibnamefont {Ekert}},\ }\bibfield  {title} {\bibinfo {title} {Quantum cryptography based on bell's theorem},\ }\href {https://doi.org/10.1103/PhysRevLett.67.661} {\bibfield  {journal} {\bibinfo  {journal} {Phys. Rev. Lett.}\ }\textbf {\bibinfo {volume} {67}},\ \bibinfo {pages} {661} (\bibinfo {year} {1991})}\BibitemShut {NoStop}%
\bibitem [{\citenamefont {Masanes}\ \emph {et~al.}(2011)\citenamefont {Masanes}, \citenamefont {Pironio},\ and\ \citenamefont {Ac{\'i}n}}]{Masanes2011}%
  \BibitemOpen
  \bibfield  {author} {\bibinfo {author} {\bibfnamefont {L.}~\bibnamefont {Masanes}}, \bibinfo {author} {\bibfnamefont {S.}~\bibnamefont {Pironio}},\ and\ \bibinfo {author} {\bibfnamefont {A.}~\bibnamefont {Ac{\'i}n}},\ }\bibfield  {title} {\bibinfo {title} {Secure device-independent quantum key distribution with causally independent measurement devices},\ }\href {https://doi.org/10.1038/ncomms1244} {\bibfield  {journal} {\bibinfo  {journal} {Nature Communications}\ }\textbf {\bibinfo {volume} {2}},\ \bibinfo {pages} {238} (\bibinfo {year} {2011})}\BibitemShut {NoStop}%
\bibitem [{\citenamefont {Cleve}\ and\ \citenamefont {Buhrman}(1997)}]{Buhrman-1997a}%
  \BibitemOpen
  \bibfield  {author} {\bibinfo {author} {\bibfnamefont {R.}~\bibnamefont {Cleve}}\ and\ \bibinfo {author} {\bibfnamefont {H.}~\bibnamefont {Buhrman}},\ }\bibfield  {title} {\bibinfo {title} {Substituting quantum entanglement for communication},\ }\href {https://doi.org/10.1103/PhysRevA.56.1201} {\bibfield  {journal} {\bibinfo  {journal} {Phys. Rev. A}\ }\textbf {\bibinfo {volume} {56}},\ \bibinfo {pages} {1201} (\bibinfo {year} {1997})}\BibitemShut {NoStop}%
\bibitem [{\citenamefont {Colbeck}\ and\ \citenamefont {Kent}(2011)}]{Colbeck-2011a}%
  \BibitemOpen
  \bibfield  {author} {\bibinfo {author} {\bibfnamefont {R.}~\bibnamefont {Colbeck}}\ and\ \bibinfo {author} {\bibfnamefont {A.}~\bibnamefont {Kent}},\ }\bibfield  {title} {\bibinfo {title} {Private randomness expansion with untrusted devices},\ }\href {https://doi.org/10.1088/1751-8113/44/9/095305} {\bibfield  {journal} {\bibinfo  {journal} {Journal of Physics A: Mathematical and Theoretical}\ }\textbf {\bibinfo {volume} {44}},\ \bibinfo {pages} {095305} (\bibinfo {year} {2011})}\BibitemShut {NoStop}%
\bibitem [{\citenamefont {Gisin}(1991)}]{Gisin1991}%
  \BibitemOpen
  \bibfield  {author} {\bibinfo {author} {\bibfnamefont {N.}~\bibnamefont {Gisin}},\ }\bibfield  {title} {\bibinfo {title} {Bell's inequality holds for all non-product states},\ }\href {https://doi.org/https://doi.org/10.1016/0375-9601(91)90805-I} {\bibfield  {journal} {\bibinfo  {journal} {Physics Letters A}\ }\textbf {\bibinfo {volume} {154}},\ \bibinfo {pages} {201} (\bibinfo {year} {1991})}\BibitemShut {NoStop}%
\bibitem [{\citenamefont {Gisin}\ and\ \citenamefont {Peres}(1992)}]{Gisin1992}%
  \BibitemOpen
  \bibfield  {author} {\bibinfo {author} {\bibfnamefont {N.}~\bibnamefont {Gisin}}\ and\ \bibinfo {author} {\bibfnamefont {A.}~\bibnamefont {Peres}},\ }\bibfield  {title} {\bibinfo {title} {Maximal violation of bell's inequality for arbitrarily large spin},\ }\href {https://doi.org/https://doi.org/10.1016/0375-9601(92)90949-M} {\bibfield  {journal} {\bibinfo  {journal} {Physics Letters A}\ }\textbf {\bibinfo {volume} {162}},\ \bibinfo {pages} {15} (\bibinfo {year} {1992})}\BibitemShut {NoStop}%
\bibitem [{\citenamefont {Popescu}(1995)}]{Popescu1995}%
  \BibitemOpen
  \bibfield  {author} {\bibinfo {author} {\bibfnamefont {S.}~\bibnamefont {Popescu}},\ }\bibfield  {title} {\bibinfo {title} {Bell's inequalities and density matrices: Revealing ``hidden'' nonlocality},\ }\href {https://doi.org/10.1103/PhysRevLett.74.2619} {\bibfield  {journal} {\bibinfo  {journal} {Phys. Rev. Lett.}\ }\textbf {\bibinfo {volume} {74}},\ \bibinfo {pages} {2619} (\bibinfo {year} {1995})}\BibitemShut {NoStop}%
\bibitem [{\citenamefont {Werner}(1989)}]{Werner1989}%
  \BibitemOpen
  \bibfield  {author} {\bibinfo {author} {\bibfnamefont {R.~F.}\ \bibnamefont {Werner}},\ }\bibfield  {title} {\bibinfo {title} {Quantum states with einstein-podolsky-rosen correlations admitting a hidden-variable model},\ }\href {https://doi.org/10.1103/PhysRevA.40.4277} {\bibfield  {journal} {\bibinfo  {journal} {Phys. Rev. A}\ }\textbf {\bibinfo {volume} {40}},\ \bibinfo {pages} {4277} (\bibinfo {year} {1989})}\BibitemShut {NoStop}%
\bibitem [{\citenamefont {Barrett}(2002)}]{Barrett2002}%
  \BibitemOpen
  \bibfield  {author} {\bibinfo {author} {\bibfnamefont {J.}~\bibnamefont {Barrett}},\ }\bibfield  {title} {\bibinfo {title} {Nonsequential positive-operator-valued measurements on entangled mixed states do not always violate a bell inequality},\ }\href {https://doi.org/10.1103/PhysRevA.65.042302} {\bibfield  {journal} {\bibinfo  {journal} {Phys. Rev. A}\ }\textbf {\bibinfo {volume} {65}},\ \bibinfo {pages} {042302} (\bibinfo {year} {2002})}\BibitemShut {NoStop}%
\bibitem [{\citenamefont {M{\'{e}}thot}\ and\ \citenamefont {Scarani}(2007)}]{Methot-2007a}%
  \BibitemOpen
  \bibfield  {author} {\bibinfo {author} {\bibfnamefont {A.~A.}\ \bibnamefont {M{\'{e}}thot}}\ and\ \bibinfo {author} {\bibfnamefont {V.}~\bibnamefont {Scarani}},\ }\bibfield  {title} {\bibinfo {title} {An anomaly of non-locality},\ }\href {https://doi.org/10.26421/QIC7.1-2-10} {\bibfield  {journal} {\bibinfo  {journal} {Quantum Inf. Comput.}\ }\textbf {\bibinfo {volume} {7}},\ \bibinfo {pages} {157} (\bibinfo {year} {2007})}\BibitemShut {NoStop}%
\bibitem [{\citenamefont {Wiseman}\ \emph {et~al.}(2007)\citenamefont {Wiseman}, \citenamefont {Jones},\ and\ \citenamefont {Doherty}}]{Wiseman2007}%
  \BibitemOpen
  \bibfield  {author} {\bibinfo {author} {\bibfnamefont {H.~M.}\ \bibnamefont {Wiseman}}, \bibinfo {author} {\bibfnamefont {S.~J.}\ \bibnamefont {Jones}},\ and\ \bibinfo {author} {\bibfnamefont {A.~C.}\ \bibnamefont {Doherty}},\ }\bibfield  {title} {\bibinfo {title} {Steering, entanglement, nonlocality, and the einstein-podolsky-rosen paradox},\ }\href {https://doi.org/10.1103/PhysRevLett.98.140402} {\bibfield  {journal} {\bibinfo  {journal} {Phys. Rev. Lett.}\ }\textbf {\bibinfo {volume} {98}},\ \bibinfo {pages} {140402} (\bibinfo {year} {2007})}\BibitemShut {NoStop}%
\bibitem [{\citenamefont {Jones}\ \emph {et~al.}(2007)\citenamefont {Jones}, \citenamefont {Wiseman},\ and\ \citenamefont {Doherty}}]{Jones2007}%
  \BibitemOpen
  \bibfield  {author} {\bibinfo {author} {\bibfnamefont {S.~J.}\ \bibnamefont {Jones}}, \bibinfo {author} {\bibfnamefont {H.~M.}\ \bibnamefont {Wiseman}},\ and\ \bibinfo {author} {\bibfnamefont {A.~C.}\ \bibnamefont {Doherty}},\ }\bibfield  {title} {\bibinfo {title} {Entanglement, einstein-podolsky-rosen correlations, bell nonlocality, and steering},\ }\href {https://doi.org/10.1103/PhysRevA.76.052116} {\bibfield  {journal} {\bibinfo  {journal} {Phys. Rev. A}\ }\textbf {\bibinfo {volume} {76}},\ \bibinfo {pages} {052116} (\bibinfo {year} {2007})}\BibitemShut {NoStop}%
\bibitem [{\citenamefont {Schrödinger}(1935)}]{schrodinger1935}%
  \BibitemOpen
  \bibfield  {author} {\bibinfo {author} {\bibfnamefont {E.}~\bibnamefont {Schrödinger}},\ }\bibfield  {title} {\bibinfo {title} {Discussion of probability relations between separated systems},\ }\href {https://doi.org/10.1017/S0305004100013554} {\bibfield  {journal} {\bibinfo  {journal} {Mathematical Proceedings of the Cambridge Philosophical Society}\ }\textbf {\bibinfo {volume} {31}},\ \bibinfo {pages} {555–563} (\bibinfo {year} {1935})}\BibitemShut {NoStop}%
\bibitem [{\citenamefont {Hirsch}\ \emph {et~al.}(2013)\citenamefont {Hirsch}, \citenamefont {Quintino}, \citenamefont {Bowles},\ and\ \citenamefont {Brunner}}]{Hirsch2013}%
  \BibitemOpen
  \bibfield  {author} {\bibinfo {author} {\bibfnamefont {F.}~\bibnamefont {Hirsch}}, \bibinfo {author} {\bibfnamefont {M.~T.}\ \bibnamefont {Quintino}}, \bibinfo {author} {\bibfnamefont {J.}~\bibnamefont {Bowles}},\ and\ \bibinfo {author} {\bibfnamefont {N.}~\bibnamefont {Brunner}},\ }\bibfield  {title} {\bibinfo {title} {Genuine hidden quantum nonlocality},\ }\href {https://doi.org/10.1103/PhysRevLett.111.160402} {\bibfield  {journal} {\bibinfo  {journal} {Phys. Rev. Lett.}\ }\textbf {\bibinfo {volume} {111}},\ \bibinfo {pages} {160402} (\bibinfo {year} {2013})}\BibitemShut {NoStop}%
\bibitem [{\citenamefont {Augusiak}\ \emph {et~al.}(2014)\citenamefont {Augusiak}, \citenamefont {Demianowicz},\ and\ \citenamefont {Acín}}]{Augusiak2014}%
  \BibitemOpen
  \bibfield  {author} {\bibinfo {author} {\bibfnamefont {R.}~\bibnamefont {Augusiak}}, \bibinfo {author} {\bibfnamefont {M.}~\bibnamefont {Demianowicz}},\ and\ \bibinfo {author} {\bibfnamefont {A.}~\bibnamefont {Acín}},\ }\bibfield  {title} {\bibinfo {title} {Local hidden–variable models for entangled quantum states},\ }\href {https://doi.org/10.1088/1751-8113/47/42/424002} {\bibfield  {journal} {\bibinfo  {journal} {Journal of Physics A: Mathematical and Theoretical}\ }\textbf {\bibinfo {volume} {47}},\ \bibinfo {pages} {424002} (\bibinfo {year} {2014})}\BibitemShut {NoStop}%
\bibitem [{\citenamefont {Quintino}\ \emph {et~al.}(2015)\citenamefont {Quintino}, \citenamefont {V\'ertesi}, \citenamefont {Cavalcanti}, \citenamefont {Augusiak}, \citenamefont {Demianowicz}, \citenamefont {Ac\'{\i}n},\ and\ \citenamefont {Brunner}}]{Quintino2015}%
  \BibitemOpen
  \bibfield  {author} {\bibinfo {author} {\bibfnamefont {M.~T.}\ \bibnamefont {Quintino}}, \bibinfo {author} {\bibfnamefont {T.}~\bibnamefont {V\'ertesi}}, \bibinfo {author} {\bibfnamefont {D.}~\bibnamefont {Cavalcanti}}, \bibinfo {author} {\bibfnamefont {R.}~\bibnamefont {Augusiak}}, \bibinfo {author} {\bibfnamefont {M.}~\bibnamefont {Demianowicz}}, \bibinfo {author} {\bibfnamefont {A.}~\bibnamefont {Ac\'{\i}n}},\ and\ \bibinfo {author} {\bibfnamefont {N.}~\bibnamefont {Brunner}},\ }\bibfield  {title} {\bibinfo {title} {Inequivalence of entanglement, steering, and bell nonlocality for general measurements},\ }\href {https://doi.org/10.1103/PhysRevA.92.032107} {\bibfield  {journal} {\bibinfo  {journal} {Phys. Rev. A}\ }\textbf {\bibinfo {volume} {92}},\ \bibinfo {pages} {032107} (\bibinfo {year} {2015})}\BibitemShut {NoStop}%
\bibitem [{\citenamefont {Bowles}\ \emph {et~al.}(2014)\citenamefont {Bowles}, \citenamefont {V\'ertesi}, \citenamefont {Quintino},\ and\ \citenamefont {Brunner}}]{Bowles2014}%
  \BibitemOpen
  \bibfield  {author} {\bibinfo {author} {\bibfnamefont {J.}~\bibnamefont {Bowles}}, \bibinfo {author} {\bibfnamefont {T.}~\bibnamefont {V\'ertesi}}, \bibinfo {author} {\bibfnamefont {M.~T.}\ \bibnamefont {Quintino}},\ and\ \bibinfo {author} {\bibfnamefont {N.}~\bibnamefont {Brunner}},\ }\bibfield  {title} {\bibinfo {title} {One-way einstein-podolsky-rosen steering},\ }\href {https://doi.org/10.1103/PhysRevLett.112.200402} {\bibfield  {journal} {\bibinfo  {journal} {Phys. Rev. Lett.}\ }\textbf {\bibinfo {volume} {112}},\ \bibinfo {pages} {200402} (\bibinfo {year} {2014})}\BibitemShut {NoStop}%
\bibitem [{\citenamefont {Hirsch}\ \emph {et~al.}(2016)\citenamefont {Hirsch}, \citenamefont {Quintino}, \citenamefont {V\'ertesi}, \citenamefont {Pusey},\ and\ \citenamefont {Brunner}}]{Hirsch2016}%
  \BibitemOpen
  \bibfield  {author} {\bibinfo {author} {\bibfnamefont {F.}~\bibnamefont {Hirsch}}, \bibinfo {author} {\bibfnamefont {M.~T.}\ \bibnamefont {Quintino}}, \bibinfo {author} {\bibfnamefont {T.}~\bibnamefont {V\'ertesi}}, \bibinfo {author} {\bibfnamefont {M.~F.}\ \bibnamefont {Pusey}},\ and\ \bibinfo {author} {\bibfnamefont {N.}~\bibnamefont {Brunner}},\ }\bibfield  {title} {\bibinfo {title} {Algorithmic construction of local hidden variable models for entangled quantum states},\ }\href {https://doi.org/10.1103/PhysRevLett.117.190402} {\bibfield  {journal} {\bibinfo  {journal} {Phys. Rev. Lett.}\ }\textbf {\bibinfo {volume} {117}},\ \bibinfo {pages} {190402} (\bibinfo {year} {2016})}\BibitemShut {NoStop}%
\bibitem [{\citenamefont {Cavalcanti}\ and\ \citenamefont {Skrzypczyk}(2016)}]{Cavalcanti2016}%
  \BibitemOpen
  \bibfield  {author} {\bibinfo {author} {\bibfnamefont {D.}~\bibnamefont {Cavalcanti}}\ and\ \bibinfo {author} {\bibfnamefont {P.}~\bibnamefont {Skrzypczyk}},\ }\bibfield  {title} {\bibinfo {title} {Quantum steering: a review with focus on semidefinite programming},\ }\href {https://doi.org/10.1088/1361-6633/80/2/024001} {\bibfield  {journal} {\bibinfo  {journal} {Reports on Progress in Physics}\ }\textbf {\bibinfo {volume} {80}},\ \bibinfo {pages} {024001} (\bibinfo {year} {2016})}\BibitemShut {NoStop}%
\bibitem [{\citenamefont {Nguyen}\ and\ \citenamefont {G\"uhne}(2020{\natexlab{a}})}]{Nguyen2020}%
  \BibitemOpen
  \bibfield  {author} {\bibinfo {author} {\bibfnamefont {H.~C.}\ \bibnamefont {Nguyen}}\ and\ \bibinfo {author} {\bibfnamefont {O.}~\bibnamefont {G\"uhne}},\ }\bibfield  {title} {\bibinfo {title} {Quantum steering of bell-diagonal states with generalized measurements},\ }\href {https://doi.org/10.1103/PhysRevA.101.042125} {\bibfield  {journal} {\bibinfo  {journal} {Phys. Rev. A}\ }\textbf {\bibinfo {volume} {101}},\ \bibinfo {pages} {042125} (\bibinfo {year} {2020}{\natexlab{a}})}\BibitemShut {NoStop}%
\bibitem [{\citenamefont {Nguyen}\ and\ \citenamefont {G\"uhne}(2020{\natexlab{b}})}]{Nguyen2020a}%
  \BibitemOpen
  \bibfield  {author} {\bibinfo {author} {\bibfnamefont {H.~C.}\ \bibnamefont {Nguyen}}\ and\ \bibinfo {author} {\bibfnamefont {O.}~\bibnamefont {G\"uhne}},\ }\bibfield  {title} {\bibinfo {title} {Quantum steering of bell-diagonal states with generalized measurements},\ }\href {https://doi.org/10.1103/PhysRevA.101.042125} {\bibfield  {journal} {\bibinfo  {journal} {Phys. Rev. A}\ }\textbf {\bibinfo {volume} {101}},\ \bibinfo {pages} {042125} (\bibinfo {year} {2020}{\natexlab{b}})}\BibitemShut {NoStop}%
\bibitem [{\citenamefont {Werner}(2017)}]{IQOQI2017}%
  \BibitemOpen
  \bibfield  {author} {\bibinfo {author} {\bibfnamefont {R.~F.}\ \bibnamefont {Werner}},\ }\href {https://oqp.iqoqi.oeaw.ac.at/open-quantum-problems} {\bibinfo {title} {Iqoqi 2017 open quantum problems}} (\bibinfo {year} {2017})\BibitemShut {NoStop}%
\bibitem [{\citenamefont {Nguyen}\ \emph {et~al.}(2018{\natexlab{a}})\citenamefont {Nguyen}, \citenamefont {Milne}, \citenamefont {Vu},\ and\ \citenamefont {Jevtic}}]{Nguyen2018}%
  \BibitemOpen
  \bibfield  {author} {\bibinfo {author} {\bibfnamefont {H.~C.}\ \bibnamefont {Nguyen}}, \bibinfo {author} {\bibfnamefont {A.}~\bibnamefont {Milne}}, \bibinfo {author} {\bibfnamefont {T.}~\bibnamefont {Vu}},\ and\ \bibinfo {author} {\bibfnamefont {S.}~\bibnamefont {Jevtic}},\ }\bibfield  {title} {\bibinfo {title} {Quantum steering with positive operator valued measures},\ }\href {https://doi.org/10.1088/1751-8121/aad115} {\bibfield  {journal} {\bibinfo  {journal} {Journal of Physics A: Mathematical and Theoretical}\ }\textbf {\bibinfo {volume} {51}},\ \bibinfo {pages} {355302} (\bibinfo {year} {2018}{\natexlab{a}})}\BibitemShut {NoStop}%
\bibitem [{\citenamefont {Werner}(2014)}]{Werner2014}%
  \BibitemOpen
  \bibfield  {author} {\bibinfo {author} {\bibfnamefont {R.~F.}\ \bibnamefont {Werner}},\ }\bibfield  {title} {\bibinfo {title} {Steering, or maybe why einstein did not go all the way to bell's argument},\ }\href {https://doi.org/10.1088/1751-8113/47/42/424008} {\bibfield  {journal} {\bibinfo  {journal} {Journal of Physics A: Mathematical and Theoretical}\ }\textbf {\bibinfo {volume} {47}},\ \bibinfo {pages} {424008} (\bibinfo {year} {2014})}\BibitemShut {NoStop}%
\bibitem [{\citenamefont {Busch}(1986)}]{Busch1986}%
  \BibitemOpen
  \bibfield  {author} {\bibinfo {author} {\bibfnamefont {P.}~\bibnamefont {Busch}},\ }\bibfield  {title} {\bibinfo {title} {Unsharp reality and joint measurements for spin observables},\ }\href {https://doi.org/10.1103/PhysRevD.33.2253} {\bibfield  {journal} {\bibinfo  {journal} {Phys. Rev. D}\ }\textbf {\bibinfo {volume} {33}},\ \bibinfo {pages} {2253} (\bibinfo {year} {1986})}\BibitemShut {NoStop}%
\bibitem [{\citenamefont {Ali}\ \emph {et~al.}(2009)\citenamefont {Ali}, \citenamefont {Carmeli}, \citenamefont {Heinosaari},\ and\ \citenamefont {Toigo}}]{Ali2009}%
  \BibitemOpen
  \bibfield  {author} {\bibinfo {author} {\bibfnamefont {S.~T.}\ \bibnamefont {Ali}}, \bibinfo {author} {\bibfnamefont {C.}~\bibnamefont {Carmeli}}, \bibinfo {author} {\bibfnamefont {T.}~\bibnamefont {Heinosaari}},\ and\ \bibinfo {author} {\bibfnamefont {A.}~\bibnamefont {Toigo}},\ }\bibfield  {title} {\bibinfo {title} {Commutative povms and fuzzy observables},\ }\href {https://doi.org/10.1007/s10701-009-9292-y} {\bibfield  {journal} {\bibinfo  {journal} {Foundations of Physics}\ }\textbf {\bibinfo {volume} {39}},\ \bibinfo {pages} {593} (\bibinfo {year} {2009})}\BibitemShut {NoStop}%
\bibitem [{\citenamefont {Quintino}\ \emph {et~al.}(2014)\citenamefont {Quintino}, \citenamefont {V\'ertesi},\ and\ \citenamefont {Brunner}}]{Quintino2014}%
  \BibitemOpen
  \bibfield  {author} {\bibinfo {author} {\bibfnamefont {M.~T.}\ \bibnamefont {Quintino}}, \bibinfo {author} {\bibfnamefont {T.}~\bibnamefont {V\'ertesi}},\ and\ \bibinfo {author} {\bibfnamefont {N.}~\bibnamefont {Brunner}},\ }\bibfield  {title} {\bibinfo {title} {Joint measurability, einstein-podolsky-rosen steering, and bell nonlocality},\ }\href {https://doi.org/10.1103/PhysRevLett.113.160402} {\bibfield  {journal} {\bibinfo  {journal} {Phys. Rev. Lett.}\ }\textbf {\bibinfo {volume} {113}},\ \bibinfo {pages} {160402} (\bibinfo {year} {2014})}\BibitemShut {NoStop}%
\bibitem [{\citenamefont {Uola}\ \emph {et~al.}(2014)\citenamefont {Uola}, \citenamefont {Moroder},\ and\ \citenamefont {G\"uhne}}]{Uola2014}%
  \BibitemOpen
  \bibfield  {author} {\bibinfo {author} {\bibfnamefont {R.}~\bibnamefont {Uola}}, \bibinfo {author} {\bibfnamefont {T.}~\bibnamefont {Moroder}},\ and\ \bibinfo {author} {\bibfnamefont {O.}~\bibnamefont {G\"uhne}},\ }\bibfield  {title} {\bibinfo {title} {Joint measurability of generalized measurements implies classicality},\ }\href {https://doi.org/10.1103/PhysRevLett.113.160403} {\bibfield  {journal} {\bibinfo  {journal} {Phys. Rev. Lett.}\ }\textbf {\bibinfo {volume} {113}},\ \bibinfo {pages} {160403} (\bibinfo {year} {2014})}\BibitemShut {NoStop}%
\bibitem [{\citenamefont {Bowles}\ \emph {et~al.}(2015)\citenamefont {Bowles}, \citenamefont {Hirsch}, \citenamefont {Quintino},\ and\ \citenamefont {Brunner}}]{Bowles2015}%
  \BibitemOpen
  \bibfield  {author} {\bibinfo {author} {\bibfnamefont {J.}~\bibnamefont {Bowles}}, \bibinfo {author} {\bibfnamefont {F.}~\bibnamefont {Hirsch}}, \bibinfo {author} {\bibfnamefont {M.~T.}\ \bibnamefont {Quintino}},\ and\ \bibinfo {author} {\bibfnamefont {N.}~\bibnamefont {Brunner}},\ }\bibfield  {title} {\bibinfo {title} {Local hidden variable models for entangled quantum states using finite shared randomness},\ }\href {https://doi.org/10.1103/PhysRevLett.114.120401} {\bibfield  {journal} {\bibinfo  {journal} {Phys. Rev. Lett.}\ }\textbf {\bibinfo {volume} {114}},\ \bibinfo {pages} {120401} (\bibinfo {year} {2015})}\BibitemShut {NoStop}%
\bibitem [{\citenamefont {Zhang}\ \emph {et~al.}(2023)\citenamefont {Zhang}, \citenamefont {Zhang},\ and\ \citenamefont {Chitambar}}]{zhang2023}%
  \BibitemOpen
  \bibfield  {author} {\bibinfo {author} {\bibfnamefont {Y.}~\bibnamefont {Zhang}}, \bibinfo {author} {\bibfnamefont {J.}~\bibnamefont {Zhang}},\ and\ \bibinfo {author} {\bibfnamefont {E.}~\bibnamefont {Chitambar}},\ }\href@noop {} {\bibinfo {title} {Compatibility complexity and the compatibility radius of qubit measurements}} (\bibinfo {year} {2023}),\ \Eprint {https://arxiv.org/abs/2302.09060} {arXiv:2302.09060 [quant-ph]} \BibitemShut {NoStop}%
\bibitem [{\citenamefont {Uola}\ \emph {et~al.}(2020)\citenamefont {Uola}, \citenamefont {Costa}, \citenamefont {Nguyen},\ and\ \citenamefont {G\"uhne}}]{Uola2020}%
  \BibitemOpen
  \bibfield  {author} {\bibinfo {author} {\bibfnamefont {R.}~\bibnamefont {Uola}}, \bibinfo {author} {\bibfnamefont {A.~C.~S.}\ \bibnamefont {Costa}}, \bibinfo {author} {\bibfnamefont {H.~C.}\ \bibnamefont {Nguyen}},\ and\ \bibinfo {author} {\bibfnamefont {O.}~\bibnamefont {G\"uhne}},\ }\bibfield  {title} {\bibinfo {title} {Quantum steering},\ }\href {https://doi.org/10.1103/RevModPhys.92.015001} {\bibfield  {journal} {\bibinfo  {journal} {Rev. Mod. Phys.}\ }\textbf {\bibinfo {volume} {92}},\ \bibinfo {pages} {015001} (\bibinfo {year} {2020})}\BibitemShut {NoStop}%
\bibitem [{\citenamefont {Heinosaari}\ \emph {et~al.}(2015)\citenamefont {Heinosaari}, \citenamefont {Kiukas}, \citenamefont {Reitzner},\ and\ \citenamefont {Schultz}}]{Heinosaari2015}%
  \BibitemOpen
  \bibfield  {author} {\bibinfo {author} {\bibfnamefont {T.}~\bibnamefont {Heinosaari}}, \bibinfo {author} {\bibfnamefont {J.}~\bibnamefont {Kiukas}}, \bibinfo {author} {\bibfnamefont {D.}~\bibnamefont {Reitzner}},\ and\ \bibinfo {author} {\bibfnamefont {J.}~\bibnamefont {Schultz}},\ }\bibfield  {title} {\bibinfo {title} {Incompatibility breaking quantum channels},\ }\href {https://doi.org/10.1088/1751-8113/48/43/435301} {\bibfield  {journal} {\bibinfo  {journal} {Journal of Physics A: Mathematical and Theoretical}\ }\textbf {\bibinfo {volume} {48}},\ \bibinfo {pages} {435301} (\bibinfo {year} {2015})}\BibitemShut {NoStop}%
\bibitem [{\citenamefont {D'Ariano}\ \emph {et~al.}(2005)\citenamefont {D'Ariano}, \citenamefont {Presti},\ and\ \citenamefont {Perinotti}}]{Ariano2005}%
  \BibitemOpen
  \bibfield  {author} {\bibinfo {author} {\bibfnamefont {G.~M.}\ \bibnamefont {D'Ariano}}, \bibinfo {author} {\bibfnamefont {P.~L.}\ \bibnamefont {Presti}},\ and\ \bibinfo {author} {\bibfnamefont {P.}~\bibnamefont {Perinotti}},\ }\bibfield  {title} {\bibinfo {title} {Classical randomness in quantum measurements},\ }\href {https://doi.org/10.1088/0305-4470/38/26/010} {\bibfield  {journal} {\bibinfo  {journal} {Journal of Physics A: Mathematical and General}\ }\textbf {\bibinfo {volume} {38}},\ \bibinfo {pages} {5979} (\bibinfo {year} {2005})}\BibitemShut {NoStop}%
\bibitem [{\citenamefont {Hua}\ \emph {et~al.}(2015)\citenamefont {Hua}, \citenamefont {Li}, \citenamefont {Zhang}, \citenamefont {Zhou}, \citenamefont {Li-Jost},\ and\ \citenamefont {Fei}}]{Hua2015}%
  \BibitemOpen
  \bibfield  {author} {\bibinfo {author} {\bibfnamefont {B.}~\bibnamefont {Hua}}, \bibinfo {author} {\bibfnamefont {M.}~\bibnamefont {Li}}, \bibinfo {author} {\bibfnamefont {T.}~\bibnamefont {Zhang}}, \bibinfo {author} {\bibfnamefont {C.}~\bibnamefont {Zhou}}, \bibinfo {author} {\bibfnamefont {X.}~\bibnamefont {Li-Jost}},\ and\ \bibinfo {author} {\bibfnamefont {S.-M.}\ \bibnamefont {Fei}},\ }\bibfield  {title} {\bibinfo {title} {Towards grothendieck constants and lhv models in quantum mechanics},\ }\href {https://doi.org/10.1088/1751-8113/48/6/065302} {\bibfield  {journal} {\bibinfo  {journal} {Journal of Physics A: Mathematical and Theoretical}\ }\textbf {\bibinfo {volume} {48}},\ \bibinfo {pages} {065302} (\bibinfo {year} {2015})}\BibitemShut {NoStop}%
\bibitem [{\citenamefont {Hirsch}\ \emph {et~al.}(2017)\citenamefont {Hirsch}, \citenamefont {Quintino}, \citenamefont {V{\'{e}}rtesi}, \citenamefont {Navascu{\'{e}}s},\ and\ \citenamefont {Brunner}}]{Hirsch2017}%
  \BibitemOpen
  \bibfield  {author} {\bibinfo {author} {\bibfnamefont {F.}~\bibnamefont {Hirsch}}, \bibinfo {author} {\bibfnamefont {M.~T.}\ \bibnamefont {Quintino}}, \bibinfo {author} {\bibfnamefont {T.}~\bibnamefont {V{\'{e}}rtesi}}, \bibinfo {author} {\bibfnamefont {M.}~\bibnamefont {Navascu{\'{e}}s}},\ and\ \bibinfo {author} {\bibfnamefont {N.}~\bibnamefont {Brunner}},\ }\bibfield  {title} {\bibinfo {title} {Better local hidden variable models for two-qubit {W}erner states and an upper bound on the {G}rothendieck constant {$K_G(3)$}},\ }\href {https://doi.org/10.22331/q-2017-04-25-3} {\bibfield  {journal} {\bibinfo  {journal} {{Quantum}}\ }\textbf {\bibinfo {volume} {1}},\ \bibinfo {pages} {3} (\bibinfo {year} {2017})}\BibitemShut {NoStop}%
\bibitem [{\citenamefont {Divi\'anszky}\ \emph {et~al.}(2017)\citenamefont {Divi\'anszky}, \citenamefont {Bene},\ and\ \citenamefont {V\'ertesi}}]{Divianszky2017}%
  \BibitemOpen
  \bibfield  {author} {\bibinfo {author} {\bibfnamefont {P.}~\bibnamefont {Divi\'anszky}}, \bibinfo {author} {\bibfnamefont {E.}~\bibnamefont {Bene}},\ and\ \bibinfo {author} {\bibfnamefont {T.}~\bibnamefont {V\'ertesi}},\ }\bibfield  {title} {\bibinfo {title} {Qutrit witness from the grothendieck constant of order four},\ }\href {https://doi.org/10.1103/PhysRevA.96.012113} {\bibfield  {journal} {\bibinfo  {journal} {Phys. Rev. A}\ }\textbf {\bibinfo {volume} {96}},\ \bibinfo {pages} {012113} (\bibinfo {year} {2017})}\BibitemShut {NoStop}%
\bibitem [{\citenamefont {Nguyen}\ \emph {et~al.}(2018{\natexlab{b}})\citenamefont {Nguyen}, \citenamefont {Milne}, \citenamefont {Vu},\ and\ \citenamefont {Jevtic}}]{Nguyen2018a}%
  \BibitemOpen
  \bibfield  {author} {\bibinfo {author} {\bibfnamefont {H.~C.}\ \bibnamefont {Nguyen}}, \bibinfo {author} {\bibfnamefont {A.}~\bibnamefont {Milne}}, \bibinfo {author} {\bibfnamefont {T.}~\bibnamefont {Vu}},\ and\ \bibinfo {author} {\bibfnamefont {S.}~\bibnamefont {Jevtic}},\ }\bibfield  {title} {\bibinfo {title} {Quantum steering with positive operator valued measures},\ }\href {https://doi.org/10.1088/1751-8121/aad115} {\bibfield  {journal} {\bibinfo  {journal} {Journal of Physics A: Mathematical and Theoretical}\ }\textbf {\bibinfo {volume} {51}},\ \bibinfo {pages} {355302} (\bibinfo {year} {2018}{\natexlab{b}})}\BibitemShut {NoStop}%
\bibitem [{\citenamefont {Designolle}\ \emph {et~al.}(2023)\citenamefont {Designolle}, \citenamefont {Iommazzo}, \citenamefont {Besançon}, \citenamefont {Knebel}, \citenamefont {Gelß},\ and\ \citenamefont {Pokutta}}]{designolle2023}%
  \BibitemOpen
  \bibfield  {author} {\bibinfo {author} {\bibfnamefont {S.}~\bibnamefont {Designolle}}, \bibinfo {author} {\bibfnamefont {G.}~\bibnamefont {Iommazzo}}, \bibinfo {author} {\bibfnamefont {M.}~\bibnamefont {Besançon}}, \bibinfo {author} {\bibfnamefont {S.}~\bibnamefont {Knebel}}, \bibinfo {author} {\bibfnamefont {P.}~\bibnamefont {Gelß}},\ and\ \bibinfo {author} {\bibfnamefont {S.}~\bibnamefont {Pokutta}},\ }\href@noop {} {\bibinfo {title} {Improved local models and new bell inequalities via frank-wolfe algorithms}} (\bibinfo {year} {2023}),\ \Eprint {https://arxiv.org/abs/2302.04721} {arXiv:2302.04721 [quant-ph]} \BibitemShut {NoStop}%
\bibitem [{\citenamefont {Uola}\ \emph {et~al.}(2016)\citenamefont {Uola}, \citenamefont {Luoma}, \citenamefont {Moroder},\ and\ \citenamefont {Heinosaari}}]{Uola2016}%
  \BibitemOpen
  \bibfield  {author} {\bibinfo {author} {\bibfnamefont {R.}~\bibnamefont {Uola}}, \bibinfo {author} {\bibfnamefont {K.}~\bibnamefont {Luoma}}, \bibinfo {author} {\bibfnamefont {T.}~\bibnamefont {Moroder}},\ and\ \bibinfo {author} {\bibfnamefont {T.}~\bibnamefont {Heinosaari}},\ }\bibfield  {title} {\bibinfo {title} {Adaptive strategy for joint measurements},\ }\href {https://doi.org/10.1103/PhysRevA.94.022109} {\bibfield  {journal} {\bibinfo  {journal} {Phys. Rev. A}\ }\textbf {\bibinfo {volume} {94}},\ \bibinfo {pages} {022109} (\bibinfo {year} {2016})}\BibitemShut {NoStop}%
\bibitem [{\citenamefont {Bavaresco}\ \emph {et~al.}(2017)\citenamefont {Bavaresco}, \citenamefont {Quintino}, \citenamefont {Guerini}, \citenamefont {Maciel}, \citenamefont {Cavalcanti},\ and\ \citenamefont {Cunha}}]{Bavaresco2017}%
  \BibitemOpen
  \bibfield  {author} {\bibinfo {author} {\bibfnamefont {J.}~\bibnamefont {Bavaresco}}, \bibinfo {author} {\bibfnamefont {M.~T.}\ \bibnamefont {Quintino}}, \bibinfo {author} {\bibfnamefont {L.}~\bibnamefont {Guerini}}, \bibinfo {author} {\bibfnamefont {T.~O.}\ \bibnamefont {Maciel}}, \bibinfo {author} {\bibfnamefont {D.}~\bibnamefont {Cavalcanti}},\ and\ \bibinfo {author} {\bibfnamefont {M.~T.}\ \bibnamefont {Cunha}},\ }\bibfield  {title} {\bibinfo {title} {Most incompatible measurements for robust steering tests},\ }\href {https://doi.org/10.1103/PhysRevA.96.022110} {\bibfield  {journal} {\bibinfo  {journal} {Phys. Rev. A}\ }\textbf {\bibinfo {volume} {96}},\ \bibinfo {pages} {022110} (\bibinfo {year} {2017})}\BibitemShut {NoStop}%
\bibitem [{\citenamefont {Ac\'{\i}n}\ \emph {et~al.}(2006)\citenamefont {Ac\'{\i}n}, \citenamefont {Gisin},\ and\ \citenamefont {Toner}}]{Acin2006}%
  \BibitemOpen
  \bibfield  {author} {\bibinfo {author} {\bibfnamefont {A.}~\bibnamefont {Ac\'{\i}n}}, \bibinfo {author} {\bibfnamefont {N.}~\bibnamefont {Gisin}},\ and\ \bibinfo {author} {\bibfnamefont {B.}~\bibnamefont {Toner}},\ }\bibfield  {title} {\bibinfo {title} {Grothendieck's constant and local models for noisy entangled quantum states},\ }\href {https://doi.org/10.1103/PhysRevA.73.062105} {\bibfield  {journal} {\bibinfo  {journal} {Phys. Rev. A}\ }\textbf {\bibinfo {volume} {73}},\ \bibinfo {pages} {062105} (\bibinfo {year} {2006})}\BibitemShut {NoStop}%
\bibitem [{\citenamefont {Oszmaniec}\ \emph {et~al.}(2017)\citenamefont {Oszmaniec}, \citenamefont {Guerini}, \citenamefont {Wittek},\ and\ \citenamefont {Ac\'{\i}n}}]{Michal2017}%
  \BibitemOpen
  \bibfield  {author} {\bibinfo {author} {\bibfnamefont {M.}~\bibnamefont {Oszmaniec}}, \bibinfo {author} {\bibfnamefont {L.}~\bibnamefont {Guerini}}, \bibinfo {author} {\bibfnamefont {P.}~\bibnamefont {Wittek}},\ and\ \bibinfo {author} {\bibfnamefont {A.}~\bibnamefont {Ac\'{\i}n}},\ }\bibfield  {title} {\bibinfo {title} {Simulating positive-operator-valued measures with projective measurements},\ }\href {https://doi.org/10.1103/PhysRevLett.119.190501} {\bibfield  {journal} {\bibinfo  {journal} {Phys. Rev. Lett.}\ }\textbf {\bibinfo {volume} {119}},\ \bibinfo {pages} {190501} (\bibinfo {year} {2017})}\BibitemShut {NoStop}%
\bibitem [{\citenamefont {Zhang}(2023)}]{zhang2023gita}%
  \BibitemOpen
  \bibfield  {author} {\bibinfo {author} {\bibfnamefont {Y.}~\bibnamefont {Zhang}},\ }\href {https://github.com/yujie4phy/LHS-Compatible-model} {\bibinfo {title} {Numerical appendix local hidden state and compatible model}} (\bibinfo {year} {2023})\BibitemShut {NoStop}%
\bibitem [{\citenamefont {Lebedev}\ and\ \citenamefont {Laikov}(1999)}]{Lebedev1999}%
  \BibitemOpen
  \bibfield  {author} {\bibinfo {author} {\bibfnamefont {V.~I.}\ \bibnamefont {Lebedev}}\ and\ \bibinfo {author} {\bibfnamefont {D.~N.}\ \bibnamefont {Laikov}},\ }\bibfield  {title} {\bibinfo {title} {A quadrature formula for the sphere of the 131st algebraic order of accuracy},\ }\href {https://api.semanticscholar.org/CorpusID:118893131} {\bibfield  {journal} {\bibinfo  {journal} {Doklady Mathematics}\ }\textbf {\bibinfo {volume} {59}},\ \bibinfo {pages} {477} (\bibinfo {year} {1999})}\BibitemShut {NoStop}%
\end{thebibliography}%
\newpage
\onecolumngrid
\section{Supplementary material}
Here we provide more detailed proofs and construction for some of the results made in the main text. 
\subsection{Construction of coarse-grained POVM $\{\Pi_A\}$ for simulating four-outcome measurement $\{M_a^{r=\frac{1}{2}}\}$}
Given a noisy four-outcome children POVM $\wt{M}_a:=M_a^{r=\frac{1}{2}}=\mu_a(\mbb{I}+1/2\hat{m}_a\cdot\vec{\sigma})$, a pair of pseudo effects $\wt{M}_{5^{\pm}}$ is defined to help with the construction of a coarse-grained POVM:
    \begin{equation}
        \wt{M}_{5^{\pm}}={\mu_5}(\mbb{I}\pm \frac{1}{2}\hat{m}_5\cdot\vec{\sigma})
    \end{equation}
where $\hat{m}_5=-\frac{\mu_1\hat{m}_1+\mu_2\hat{m}_2}{|\mu_1\hat{m}_1+\mu_2\hat{m}_2|}$ and $\mu_5=|\mu_1\hat{m}_1+\mu_2\hat{m}_2|$.
The corresponding $18$-effect POVM $\{\Pi_A\}_A$ defined by the `on' and `off' of vectors $\{\hat{m}_a\}_{a=1}^4\cap\{\pm\hat{m}_5\}$ can then be computed as:
\begin{align}
\Pi_A=\int_{\mc{S}} d\hat{n} \prod_{a\in A}\Theta(\hat{m}_a\cdot\hat{n})\prod_{a'\notin A}(1-\Theta(\hat{m}_{a'}\cdot\hat{n}))\Pi_{\hat{n}}
\label{eq:18effects}
\end{align}
where $A\subset\{1,2,3,4,5^+,5^-\}$. In addition, two extra six-effect POVMs $\{\Pi^{\pm}_{A_{\pm}}\}$ are also defined for the characterization of the response function $p(a|A)$ given by
  \begin{align}
\Pi^{\pm}_{A_{\pm}}=\int_{\mc{S}} d\hat{n} \prod_{a\in A_{\pm}}\Theta(\hat{m}_a\cdot\hat{n})\prod_{a'\notin A_{\pm}}(1-\Theta(\hat{m}_{a'}\cdot\hat{n}))\Pi_{\hat{n}}
\end{align}  
with $A_{\pm}\subset\mc{O}_{\pm}$, where $\mc{O_+}=\{5^+,1,2\}$ and $\mc{O_-}=\{5^-,3,4\}$

 The (normalized) response function of this $18$-effect $\{\Pi_A\}_A$ for the simulation of $\{\wt{M}_a\}_{a=1}^4$ is summarized in this table
\begin{table}[h]
\centering
\begin{adjustbox}{max width=\textwidth}
    \begin{tabular}{c|c|c|c|c|c|c|c|c|c|c|c|c|c|c|c|c|c|c}
    \hline
 & $\Pi_{\{1,5^-\}}$ & $\Pi_{\{2,5^-\}}$ &  $\Pi_{\{1,2,5^-\}}$ &$\Pi_{\{1,3,5^-\}}$ 
 & $\Pi_{\{1,4,5^-\}}$ &  $\Pi_{\{2,3,5^-\}}$ & $\Pi_{\{2,4,5^-\}}$& $\Pi_{\{3,5^+\}}$ 
 & $\Pi_{\{4,5^+\}}$ &  $\Pi_{\{1,2,3,5^-\}}$ & $\Pi_{\{1,2,4,5^-\}}$  &  $\Pi_{\{1,3,5^+\}}$ 
 & $\Pi_{\{1,4,5^+\}}$ &$\Pi_{\{2,3,5^+\}}$ &  $\Pi_{\{2,4,5^+\}}$ & $\Pi_{\{3,4,5^+\}}$ 
 &  $\Pi_{\{1,3,4,5^+\}}$ & $\Pi_{\{2,3,4,5^+\}}$  \\
    \hline
    $\wt{M}_1$ & $\kappa_+$ & 0 &$2\mu_1-X_+\kappa_+$ & $\kappa_+$
                          & $\kappa_+$ & 0 &0 &$Y_+\kappa_+$ 
                          & $Y_+\kappa_+$ &  $2\mu_1-X_+\kappa_+$ &  $2\mu_1-X_+\kappa_+$ & $\kappa_+-2\mu_5$
                          & $\kappa_+-2\mu_5$ & 0 & 0 &$Y_+\kappa_+$ 
                          &$\kappa_+-2\mu_5$ & 0\\
    $\wt{M}_2$ & 0 & $\kappa_+$& $2\mu_2-Y_+\kappa_+$ & 0
                          & 0 & $\kappa_+$ &$\kappa_+$ &$X_+\kappa_+$  
                          & $X_+\kappa_+$  &  $2\mu_2-Y_+\kappa_+$ &  $2\mu_2-Y_+\kappa_+$ & 0
                          & 0& $\kappa_+-2\mu_5$ & $\kappa_+-2\mu_5$ &$X_+\kappa_+$ 
                          &0 & $\kappa_+-2\mu_5$\\
    $\wt{M}_3$ & $Y_-\kappa_-$ & $Y_-\kappa_-$ & $Y_-\kappa_-$ & $\kappa_--2\mu_5$
                          & 0 &  $\kappa_--2\mu_5$ &0 &$\kappa_-$
                          & 0 &   $\kappa_--2\mu_5$ &  0& $\kappa_-$
                          & 0& $\kappa_-$ & 0 &$2\mu_3-X_-\kappa_-$
                          &$2\mu_3-X_-\kappa_-$ & $2\mu_3-X_-\kappa_-$\\
    $\wt{M}_4$ & $X_-\kappa_-$ & $X_-\kappa_-$ & $X_-\kappa_-$ & 0
                          & $\kappa_--2\mu_5$ & 0 &$\kappa_--2\mu_5$ &0 
                          & $\kappa_-$ &  0& $\kappa_--2\mu_5$ & 0
                          & $\kappa_-$& 0 & $\kappa_-$ &$2\mu_4-Y_-\kappa_-$
                          &$2\mu_4-Y_-\kappa_-$& $2\mu_4-Y_-\kappa_-$\\
    \hline
    \end{tabular}
    \end{adjustbox}
    \caption{Response function $p(a|A)$ for simulating $\{\wt{M}_a\}_{a=1}^4$ with $18$-effect POVM $\{\Pi_A\}_A$.}
    \label{tab:Response function4}
\end{table} \\
where
\begin{align}
X_+&=\frac{\beta_{5^+}q_{5^+}+\beta_1q_1-\beta_2q_2}{2\beta_{5^+}}~~~~
Y_+=\frac{\beta_{5^+}q_{5^+}-\beta_1q_1+\beta_2q_2}{2\beta_{5^+}}
\end{align}
with $\beta_a=\frac{\tr{\Pi^+_{\{a\}}}}{2}$ and $q_a=1-\frac{2}{\kappa_+}\mu_a$ for $a\in\mc{O_+}$, and
\begin{align}
    X_-&=\frac{\gamma_{5^-}q_{5^-}+\gamma_3q_3-\gamma_4q_4}{2\gamma_{5^-}}~~~~
Y_-=\frac{\gamma_{5^-}q_{5^-}-\gamma_3q_3+\gamma_4q_4}{2\gamma_{5^-}}
\end{align}
with $\gamma_a=\frac{\tr{\Pi^-_{\{a\}}}}{2}$  and $q_a=1-\frac{2}{\kappa_-}\mu_a$ for $a\in\mc{O_-}$. \par 

\subsection{Disputing compatible model using Farka's lemma}
In this paper, we demonstrated the equivalence of POVMs and PVMs in the context of quantum steering. We achieve this by establishing that a compatible model exists for all noisy POVMs whenever such a model exists for all noisy PVMs at the same noise threshold. However, this elegant equivalence disappears when considering scenarios that involve only finite-shared randomness and specific parent POVMs (or specific local hidden states).\par 

In a parallel paper\cite{zhang2023}, we conduct an extensive analysis on the feasibility of a compatible model for noisy PVMs with finite shared randomness. Notably, one intriguing observation we make there is that the optimal parent POVM for simulating the entire set of noisy PVMs may not necessarily be central-symmetric, even if it does exist. The realization that asymmetry holds value in simulating all PVMs provides insight into the potential divergence between POVMs and PVMs within this restricted compatible model.\par 

In the following sections, we provide a rigorous proof for the aforementioned assertion using Farka’s lemma. As we discussed in the main text, We want to show that a solution of $A\cdot \boldsymbol{x}=\boldsymbol{b}$ exists with coefficients $x_{a|i}\geq 0$ that also satisfy the condition $\sum_ax_{a|i}=1$ for all $i=1,\cdots,n$ and $a=1,\cdots,m$, where

\setcounter{MaxMatrixCols}{20}
\begin{align}
    A&=\begin{pmatrix} \vec{\pi}_1&\cdots&\vec{\pi}_n&0&\cdots&0&0&0&0\\
& \ddots &&& \ddots &&& \ddots &\\ 0&0&0&0&\cdots&0&\vec{\pi}_1&\cdots&\vec{\pi}_n\\
    1&\cdots  &0&1&\cdots    &0&1&\cdots  &0\\
    &\ddots  &&&\ddots  &&&\ddots  &\\
    0&\cdots  &1&0&\cdots  &1& 0&\cdots  &1\end{pmatrix},&    
    \boldsymbol{x}&=\begin{pmatrix}x_{1|1}\\ \vdots \\x_{1|n} \\x_{2|n}\\ \vdots\\ x_{m|n}\end{pmatrix},& \boldsymbol{b}&=\begin{pmatrix}\vec{m}_1\\\vdots \\\vec{m}_m\\1\\ \vdots\\ 1\end{pmatrix}.
\end{align}
\setcounter{lemma}{0}
\begin{lemma}[Farkas' Lemma]
\begin{equation}
  \nexists \boldsymbol{x}\ge \boldsymbol{0} \text{ s.t. } A\cdot \boldsymbol{x}=\boldsymbol{b} \Longleftrightarrow \exists y \text{ s.t. } A^T\cdot \boldsymbol{y}\ge \boldsymbol{0} \text{ and }\boldsymbol{b}^T\cdot \boldsymbol{y}<0
\end{equation}
where $\boldsymbol{y}\in\mbb{R}^{4m+n}$ as $y=(\vec{y}_1,\cdots,\vec{y}_m,z_1,\cdots,z_n)^T$. 
\end{lemma}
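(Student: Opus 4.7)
The plan is to prove the two directions of the equivalence separately. The reverse direction (that the two alternatives cannot simultaneously hold) is a short computation, while the forward direction (that the failure of one alternative forces the other) is the deep content of Farkas' Lemma and calls for a separating hyperplane argument.

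First, I would dispose of the easy implication. Suppose for contradiction that both $\boldsymbol{x} \geq \boldsymbol{0}$ with $A\boldsymbol{x} = \boldsymbol{b}$ and some $\boldsymbol{y}$ with $A^T \boldsymbol{y} \geq \boldsymbol{0}$ and $\boldsymbol{b}^T \boldsymbol{y} < 0$ exist. Then
\[
\boldsymbol{b}^T \boldsymbol{y} \;=\; (A\boldsymbol{x})^T \boldsymbol{y} \;=\; \boldsymbol{x}^T (A^T \boldsymbol{y}) \;\geq\; 0,
\]
since $\boldsymbol{x}$ and $A^T \boldsymbol{y}$ are entrywise nonnegative, contradicting $\boldsymbol{b}^T \boldsymbol{y} < 0$. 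Hence at most one of the two alternatives can hold.

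For the substantive forward direction, I would invoke the separating hyperplane theorem. Define $C := \{A\boldsymbol{x} : \boldsymbol{x} \geq \boldsymbol{0}\}$, the convex cone generated by the columns of $A$. The nonexistence hypothesis is precisely the statement $\boldsymbol{b} \notin C$. Granting for the moment that $C$ is closed, strict separation of the point $\boldsymbol{b}$ from the closed convex set $C$ yields $\boldsymbol{y}$ and $\alpha \in \mathbb{R}$ with $\boldsymbol{y}^T \boldsymbol{c} \geq \alpha > \boldsymbol{y}^T \boldsymbol{b}$ for every $\boldsymbol{c} \in C$. Since $\boldsymbol{0} \in C$, evaluating at $\boldsymbol{c} = \boldsymbol{0}$ forces $\alpha \leq 0$, and scaling any $\boldsymbol{c} \in C$ by arbitrarily large positive constants (permissible because $C$ is a cone) forces $\boldsymbol{y}^T \boldsymbol{c} \geq 0$ throughout $C$. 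Hence $\boldsymbol{b}^T \boldsymbol{y} < 0$ and $\boldsymbol{y}^T A \boldsymbol{x} \geq 0$ for all $\boldsymbol{x} \geq \boldsymbol{0}$; specializing $\boldsymbol{x}$ to the standard basis vectors gives $A^T \boldsymbol{y} \geq \boldsymbol{0}$, as required.

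The main obstacle is justifying that $C$ is closed, since images of closed sets under linear maps are not closed in general. I would handle this via Carathéodory's theorem for conic hulls: every element of $C$ admits a representation as a nonnegative combination of at most $\mathrm{rank}(A)$ linearly independent columns of $A$. Therefore $C = \bigcup_S C_S$, where $S$ ranges over the finitely many linearly independent subsets of the column set and each $C_S$ is the image of the closed orthant $\mathbb{R}^{|S|}_{\geq 0}$ under an injective linear map, hence a closed simplicial cone. A finite union of closed sets is closed, so $C$ is closed and the separation step goes through, completing the proof.
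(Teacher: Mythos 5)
Your proof is correct and complete. Note, however, that the paper does not actually prove this statement: Farkas' Lemma is invoked as a classical result from linear programming duality and is only \emph{applied} (to rule out a compatible model for the five-effect parent POVM in Proposition 1), so there is no in-paper argument to compare yours against. Your two-part structure is the standard one: the easy direction via $\boldsymbol{b}^T\boldsymbol{y}=\boldsymbol{x}^T(A^T\boldsymbol{y})\ge 0$, and the hard direction via strict separation of $\boldsymbol{b}$ from the cone $C=\{A\boldsymbol{x}:\boldsymbol{x}\ge\boldsymbol{0}\}$. You correctly identify the one genuinely delicate point, the closedness of a finitely generated cone, and your resolution via Carath\'eodory's theorem for conic hulls (writing $C$ as a finite union of simplicial cones, each the image of a closed orthant under an injective linear map and hence closed) is sound. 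The only cosmetic remark is that the lemma as stated in the paper is a biconditional, and your reverse direction establishes the contrapositive (existence of $\boldsymbol{y}$ precludes existence of $\boldsymbol{x}$), which is exactly what is needed; you could make that logical bookkeeping explicit, but nothing is missing.
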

Therefore, to show the non-existence of a positive, normalized response function $x_{a|i}$ for simulating $\{\wt{M}_a\}$ with $\{\Pi_i\}_{i=1}^n$, it suffices to find a vector $\boldsymbol{y}$ such  that $A^T\cdot \boldsymbol{y}\ge \boldsymbol{0} \text{ and }\boldsymbol{b}^T\cdot \boldsymbol{y}<0$ holds at the same time.

\setcounter{proposition}{0}
\begin{proposition}
There exists a five-effect parent POVM $\{\Pi_i\}_{i=1}^5$ that can simulate all children PVMs with radius $r\le 0.3714$, whereas a compatible model fails to exist for some three-outcome children POVMs with radius $r> 0.3220$
\end{proposition}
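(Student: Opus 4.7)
The plan is to exhibit an explicit, deliberately asymmetric five-effect parent POVM $\{\Pi_i\}_{i=1}^5$ and verify its PVM-simulability radius and its POVM failure radius as two independent computations.  Writing $\Pi_i = \alpha_i(\mbb{I} + \hat{n}_i\cdot\vec{\sigma})$, the construction is guided by the observation developed in the companion work \cite{zhang2023} that the best finite-outcome parent POVM for simulating the whole family of noisy qubit PVMs need not be central-symmetric.  I expect asymmetry to be the key feature that separates the PVM and POVM thresholds: the noisy PVM family is itself central-symmetric (flipping $\hat{m}$ swaps the two effects), so any bias in the distribution of the $\hat{n}_i$ is ``averaged out'' when matching a PVM but can obstruct matching a generic three-outcome POVM.

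For the first claim, a parent POVM simulates the noisy PVM pair $\{M_{\pm}^r = \tfrac{1}{2}(\mbb{I} \pm r\hat{m}\cdot\vec{\sigma})\}$ iff there exist $x_i \in [0,1]$ with $\sum_i x_i \alpha_i = 1/2$ and $\sum_i x_i \alpha_i \hat{n}_i = \tfrac{r}{2}\hat{m}$.  The set $K$ of Bloch vectors realisable in this way is a convex polytope in $\mbb{R}^3$, and the first claim is equivalent to $K$ containing the closed ball of radius $0.3714$.  That inclusion reduces to a finite check: for each facet of $K$, verify that its distance to the origin is at least $0.3714$.  Since the facet normals and offsets are linear functions of $\{\alpha_i, \hat{n}_i\}$, once the parent POVM is fixed the verification is elementary.

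For the second claim, I would follow the Farkas route described in the Methods.  Select a specific three-outcome POVM $\{\wt{M}_a = \mu_a(\mbb{I} + r\hat{m}_a\cdot\vec{\sigma})\}_{a=1}^3$ at a radius $r$ slightly above $0.3220$, form the linear system $A\boldsymbol{x} = \boldsymbol{b}$ with $\boldsymbol{x} \geq \boldsymbol{0}$ of Eq.~\eqref{eq: linear equations}, and exhibit a dual vector $\boldsymbol{y} \in \mbb{R}^{17}$ (since here $4m+n = 4\cdot 3 + 5 = 17$) satisfying $A^T \boldsymbol{y} \geq \boldsymbol{0}$ and $\boldsymbol{b}^T \boldsymbol{y} < 0$.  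Once $\boldsymbol{y}$ is written down explicitly, the verification reduces to finitely many inner products.

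The main obstacle is numerical: identifying the right pair, namely an asymmetric parent POVM whose PVM-simulability radius is at least $0.3714$ together with a three-outcome POVM at radius just above $0.3220$ that it cannot simulate.  Because both regions are polyhedral in their respective parameters, I expect to find candidates by alternating linear programs.  Fix a candidate parent, solve for the smallest $r$ at which some three-outcome POVM cannot be simulated (the primal-dual LP directly produces both the bad POVM and a Farkas witness $\boldsymbol{y}$), then adjust $\{\alpha_i, \hat{n}_i\}$ to push the PVM threshold upward while preserving the POVM gap.  The tight numerical values $0.3714$ and $0.3220$ will emerge from this optimization rather than any closed-form argument; the role of the formal proof is only to certify, post hoc, the explicit five-effect parent and three-outcome witness POVM via direct facet inequalities and Farkas' lemma respectively.
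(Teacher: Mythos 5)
Your plan matches the paper's proof in both structure and substance: the paper exhibits exactly such an asymmetric five-effect parent (two effects of unequal weight along $\pm\hat{x}$ plus a symmetric triple), certifies the PVM threshold $0.3714$ via the closed-form compatible-radius criterion of the companion work---which is equivalent to your polytope-containment check, up to the factor of two (the ball that must lie inside $K$ has radius $r/2$, not $r$)---and refutes simulability of an explicit planar three-outcome trine by writing down a Farkas certificate $\boldsymbol{y}$ with $\vec{y}_a=(0,-\hat{m}_a)^{\mathsf{T}}$, exactly as you propose. The only thing separating your proposal from the paper's proof is the explicit witnesses themselves: since the proposition is an existence claim, the construction \emph{is} the proof, and your text defers precisely that content to an unexecuted numerical search.
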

\begin{proof}
The 5-effect POVM $\{\Pi_i\}$ to be considered is of the form of:
\begin{align}
\vec{\pi}_1&=0.242(1,\hat{n}_1)^T~~~~\hat{n}_1=(1,0,0)^T \notag\\
\vec{\pi}_2&=0.098(1,\hat{n}_2)^T~~~~\hat{n}_2=(-1, 0, 0)^T \notag\\
\vec{\pi}_3&=0.220(1,\hat{n}_3)^T~~~~\hat{n}_3=(-\frac{12}{55}, \sqrt{1-(\frac{12}{55})^2},0)^T \notag\\
\vec{\pi}_4&=0.220(1,\hat{n}_4)^T~~~~\hat{n}_4=(-\frac{12}{55}, -\frac{1}{2}\sqrt{1-(\frac{12}{55})^2},\frac{\sqrt{3}}{2}\sqrt{1-(\frac{12}{55})^2} )^T \notag\\
\vec{\pi}_5&=0.220(1,\hat{n}_5)^T~~~~\hat{n}_5=(-\frac{12}{55}, -\frac{1}{2}\sqrt{1-(\frac{12}{55})^2},-\frac{\sqrt{3}}{2}\sqrt{1-(\frac{12}{55})^2} )^T
\label{eq:5-parent}
\end{align}
First, we used the criteria for computing the compatible radius $R(\{\Pi_i\})$ in \cite{zhang2023} to obtain a close form expression: 
\begin{equation}
R(\{\Pi_i\})=0.34+0.144\frac{12}{55}\approx 0.3714.
\end{equation}
Therefore, any PVMs with visibility $r\le 0.3714$ can be simulated by the corresponding five-effect POVM. \par 
Now, we give examples showing the infeasibility of simulating three-outcome POVMs $\{M_a^r\}$ with this five-effect parent POVM for an even smaller threshold $r<R(\{\Pi_i\})$.  The specific three-outcome POVM $\{M_a^r\}$ we consider is given by the four-vectors
\begin{align}
\vec{m}_1&=\frac{1}{3}(1,r\hat{m}_1)^T~~~~\hat{m}_1=(0,-1,0)^T \notag \\
\vec{m}_2&=\frac{1}{3}(1,r\hat{m}_2)^T~~~~\hat{m}_2=(0,\frac{1}{2}, \frac{\sqrt{3}}{2})^T \notag\\
\vec{m}_3&=\frac{1}{3}(1,r\hat{m}_3)^T~~~~\hat{m}_3=(0.\frac{1}{2}, -\frac{\sqrt{3}}{2})^T.
\label{eq:counter-3-outcome}
\end{align}
By considering vector $\boldsymbol{y}\in\mbb{R}^{4\times 3+5}$ as $y=(\vec{y}_1,\vec{y}_2,\vec{y}_3,z_1,z_2,z_3,z_4,z_5)^T$, where:
\begin{align}
\vec{y}_1&=(0,-\hat{m}_1)^T,~~~~\vec{y}_2=(0,-\hat{m}_2)^T~~~~\vec{y}_3=(0,-\hat{m}_3)^T\notag \\
z_1&=z_2=z_3=0.110\sqrt{1-(\frac{12}{55})^2},~~~~z_4=z_5=0,
\end{align}
and $A^T\cdot \boldsymbol{y}\ge 0$, one can easily verify that with $ \boldsymbol{b}=(\vec{m}_1,\vec{m}_2,\vec{m}_3,1,1,1,1,1)^T$:
\begin{equation}
   \boldsymbol{b}^T\cdot \boldsymbol{y}=-r+0.330\sqrt{1-(\frac{12}{55})^2}<0\Longrightarrow r>0.330\sqrt{1-(\frac{12}{55})^2}\approx 0.3220
\end{equation}
Therefore, from Farka's lemma, for $r>0.3220$, the three-outcome POVM defined in Eq.~\ref{eq:counter-3-outcome} can never be simulated by the five-effect POVM $\{\Pi_i\}$, whereas all PVMs can be simulated by it with $r<0.3714$.

\end{proof}
\begin{conjecture}
The $5$-outcome POVM in Eq.~\ref{eq:5-parent} is near optimal in terms of simulating PVMs, i.e, close to the optimal $\{\Pi_i\}$ that has the largest compatible radius $R(\{\Pi_i\}) \approx 0.3718$ \cite{zhang2023}.  Therefore, we conjecture that POVMs and PVMs are inequivalent given fixed finite shared randomness (without specifying the local hidden state).
\end{conjecture}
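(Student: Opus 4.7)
The plan is to tackle the conjecture in two stages, corresponding to its two assertions: (i) the specific five-effect parent POVM in Eq.~\eqref{eq:5-parent} achieves a compatibility radius $R(\{\Pi_i\})\approx 0.3714$ that is within $\approx 10^{-3}$ of the supremum $R^\star \approx 0.3718$ over all finite-outcome parent POVMs for simulating PVMs, and (ii) combined with Proposition~\ref{prop:Farka} this forces inequivalence of PVMs and POVMs under fixed finite shared randomness, in the precise sense that no finite parent POVM can simultaneously attain $R_{\mathrm{POVM}} = R_{\mathrm{PVM}}$ when the latter is close to $R^\star$.

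For stage (i), my first step would be to invoke the explicit compatibility-radius criterion of \cite{zhang2023}, which expresses $R(\{\Pi_i\})$ as the solution of a convex (indeed semidefinite or linear) program over response functions $p(\pm|\hat m,i)$ simulating all noisy PVMs $\{M^r_\pm(\hat m)\}$. Using this criterion, I would first verify the closed-form value $R(\{\Pi_i\})=0.34+0.144\cdot\tfrac{12}{55}\approx 0.3714$ for the POVM of Eq.~\eqref{eq:5-parent} by direct computation, matching the structure used in Proposition~\ref{prop:Farka}. I would then cast the outer problem $R^\star = \sup_{\{\Pi_i\}} R(\{\Pi_i\})$ as a max-min over (parent POVM, response function) pairs and attack it in two complementary ways: (a) numerically, by a branch-and-bound or gradient ascent over $n$-effect POVMs for $n\le N$, leveraging the joint convexity of the compatibility region in the parent effects to certify upper bounds $R^\star \le 0.3718 + \varepsilon$; and (b) analytically, by exploiting rotational symmetry breaking---as noted in \cite{zhang2023}, optimal parents need not be centrally symmetric---to reduce the problem to a low-dimensional family of axially perturbed tetrahedral configurations around the one in Eq.~\eqref{eq:5-parent}, where a second-order expansion of $R$ in the perturbation should reveal that the configuration in Eq.~\eqref{eq:5-parent} is a near-critical point with residual gap $R^\star - 0.3714 \lesssim 4\times 10^{-4}$.

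For stage (ii), I would lift the Farkas-lemma argument of Proposition~\ref{prop:Farka} from the specific parent of Eq.~\eqref{eq:5-parent} to an arbitrary finite parent $\{\Pi_i\}_{i=1}^n$ with $R(\{\Pi_i\})$ close to $R^\star$. The key structural input is the observation of \cite{zhang2023} that any $R$-optimal (or near-optimal) parent is asymmetric, so the Bloch-vector distribution of $\{\Pi_i\}$ has at least one direction $\hat u$ along which the ``density of support'' is thin. I would construct a three-outcome noisy POVM whose normalized Bloch vectors $\{\hat m_a\}$ lie in a plane orthogonal to $\hat u$ and whose weights $\mu_a$ are chosen generically (as in Eq.~\eqref{eq:counter-3-outcome}). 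Writing out the feasibility system $A\boldsymbol{x}=\boldsymbol{b}$, $\boldsymbol{x}\ge\boldsymbol{0}$ in the form of Eq.~\eqref{eq: linear equations}, I would produce an explicit Farkas certificate $\boldsymbol{y}=(\vec y_1,\vec y_2,\vec y_3,z_1,\dots,z_n)$ with $\vec y_a=(0,-\hat m_a)^T$ and $z_i$ supported on the effects whose Bloch directions are approximately antipodal to the plane of $\{\hat m_a\}$. Asymmetry of the parent then guarantees $A^T\boldsymbol{y}\ge 0$ together with a strictly negative objective $\boldsymbol{b}^T\boldsymbol{y}<0$ for some noise threshold $r_0(\{\Pi_i\}) < R(\{\Pi_i\})$, giving the desired separation.

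The main obstacle I anticipate is stage (ii): one must show that the Farkas certificate can be constructed \emph{uniformly} in the parent $\{\Pi_i\}$, with the gap $R(\{\Pi_i\}) - r_0(\{\Pi_i\})$ bounded away from zero by a quantity that does not collapse as one approaches the (hypothetical) optimum. Equivalently, I would need to rule out a sequence of parent POVMs along which $R_{\mathrm{PVM}}-R_{\mathrm{POVM}}\to 0$. I expect to control this via a compactness argument on the space of normalized parent POVMs with bounded support (or via a limiting continuous-parent analysis that connects back to the $\{\Pi_{\hat n}\}$ construction of the main theorems), but making this quantitative---rather than merely qualitative---is the crux of the conjecture and is what currently prevents the statement from being a theorem.
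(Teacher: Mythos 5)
The paper offers no proof of this statement --- it is stated and left as a conjecture, supported only by two pieces of evidence: Proposition~\ref{prop:Farka} (the explicit five-effect parent of Eq.~\eqref{eq:5-parent} simulates all noisy PVMs up to $r\le 0.3714$ but provably fails, via a Farkas certificate, for the specific three-outcome POVM of Eq.~\eqref{eq:counter-3-outcome} once $r>0.3220$), and the numerical claim from \cite{zhang2023} that the optimal finite parent reaches only $R\approx 0.3718$ for PVMs. So there is no paper proof to compare against, and your proposal should be judged on whether it would actually close the conjecture. It would not, and to your credit you say so yourself in the final paragraph: the crux is producing a Farkas certificate \emph{uniformly} over all parents $\{\Pi_i\}$ whose PVM radius is near the optimum, with a separation $R_{\mathrm{PVM}}(\{\Pi_i\})-R_{\mathrm{POVM}}(\{\Pi_i\})$ bounded away from zero. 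Your stage~(ii) reduction is logically sound as far as it goes (since $R_{\mathrm{POVM}}\le R_{\mathrm{PVM}}$ pointwise, only near-PVM-optimal parents need to be handled), but the step ``asymmetry of the parent then guarantees $A^T\boldsymbol{y}\ge 0$'' is asserted, not argued. The certificate in the paper's Proposition~\ref{prop:Farka} is tailored to the exact geometry of Eq.~\eqref{eq:5-parent}: the $\vec y_a=(0,-\hat m_a)^T$ and the support of the $z_i$ are chosen by hand against the known Bloch vectors $\hat n_i$, and nothing in that construction suggests it survives an arbitrary perturbation of the parent, let alone an arbitrary near-optimal parent with a different number of effects or a different symmetry class.

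Two further points. First, be careful about what ``fixed finite shared randomness'' quantifies over. The paper's own results show that finite parents achieve PVM radius arbitrarily close to $1/2$ as the number of effects grows, and Theorem~\ref{thm:4outcome} shows the continuous parent $\{\Pi_{\hat n}\}$ achieves POVM radius $1/2$; so if ``finite'' means ``arbitrary finite $n$,'' both suprema may well coincide at $1/2$ and the conjecture, read that way, is plausibly \emph{false}. The intended reading is fixed $n$ (here $n=5$, whence $R^\star\approx 0.3718$), and your compactness argument should be formulated over the compact space of $n$-effect parents for fixed $n$ --- your appeal to ``a limiting continuous-parent analysis'' points in exactly the wrong direction, since that limit is where the PVM/POVM gap provably closes. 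Second, your stage~(i) treats the value $0.3718$ as something to be certified by branch-and-bound; the paper simply imports it from \cite{zhang2023} as a numerical claim, and any genuine proof of the conjecture would indeed need a rigorous upper bound on $\sup_{\{\Pi_i\}}R_{\mathrm{PVM}}$ for fixed $n$, which neither the paper nor your proposal supplies. In short: your roadmap is a reasonable research program and correctly locates the obstruction, but it is not a proof, and the statement remains a conjecture.
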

\end{document}